\def\eqref#1{equation~\ref{#1}}
\def\1{\bm{1}}
\DeclareMathAlphabet{\mathsfit}{\encodingdefault}{\sfdefault}{m}{sl}
\SetMathAlphabet{\mathsfit}{bold}{\encodingdefault}{\sfdefault}{bx}{n}
\newtheorem{theorem}{Theorem}[section]
\newtheorem{lemma}[theorem]{Lemma}
\newtheorem{corollary}[theorem]{Corollary}
\newtheorem{definition}[theorem]{Definition}
\newtheorem{assumption}[theorem]{Assumption}
\newtheorem{example}[theorem]{Example}
\theoremstyle{remark}
\newtheorem{remark}[theorem]{Remark}
\title{The Blessings of Multiple Treatments and Outcomes in Treatment Effect Estimation}
\author{Yong Wu\textsuperscript{1,4,5,6}~~~~~~~~ Mingzhou Liu\textsuperscript{2}~~~~~~~~ Jing Yan\textsuperscript{3}\\
	\textbf{Yanwei Fu\textsuperscript{3}~~~~~~~~ Shouyan Wang\textsuperscript{1,4,5,6,7}~~~~~~~~Yizhou Wang\textsuperscript{2}~~~~~~~~ Xinwei Sun\textsuperscript{3}\thanks{Corresponding author}}\\
	\textsuperscript{1}Institute of Science and Technology for Brain-Inspired Intelligence, Fudan University\\
	\textsuperscript{2}Computer Science Department, Peking University\\
	\textsuperscript{3}School of Data Science, Fudan University\\
	\textsuperscript{4}Key Laboratory of Computational Neuroscience and Brain-Inspired Intelligence\\
	\textsuperscript{5}MOE Frontiers Center for Brain Science, Fudan University\\
	\textsuperscript{6}Zhangjiang Fudan International Innovation Center\\
	\textsuperscript{7}Engineering Research Center of AI \& Robotics, Ministry of Education, Fudan University\\
	\texttt{\{yongwu21,21210980083\}@m.fudan.edu.cn}~~~~ \texttt{liumingzhou@stu.pku.edu.cn}\\
	\texttt{\{yanweifu,shouyan,sunxinwei\}@fudan.edu.cn}~~~~ \texttt{yizhou.wang@pku.edu.cn}
}
\begin{document}

\maketitle
\addtocontents{toc}{\protect\setcounter{tocdepth}{0}}
\begin{abstract}

Assessing causal effects in the presence of unobserved confounding is a challenging problem. Existing studies leveraged proxy variables or multiple treatments to adjust for the confounding bias. In particular, the latter approach attributes the impact on a single outcome to multiple treatments, allowing estimating latent variables for confounding control. Nevertheless, these methods primarily focus on a single outcome, whereas in many real-world scenarios, there is greater interest in studying the effects on multiple outcomes. Besides, these outcomes are often coupled with multiple treatments. Examples include the intensive care unit (ICU), where health providers evaluate the effectiveness of therapies on multiple health indicators. To accommodate these scenarios, we consider a new setting dubbed as \emph{multiple treatments and multiple outcomes}. We then show that parallel studies of multiple outcomes involved in this setting can assist each other in causal identification, in the sense that we can exploit other treatments and outcomes as proxies for each treatment effect under study. We proceed with a causal discovery method that can effectively identify such proxies for causal estimation. The utility of our method is demonstrated in synthetic data and sepsis disease.

\end{abstract}
\vspace{-0.25cm}
\section{Introduction}
\vspace{-0.15cm}
\label{sec.intro}

Estimating average causal effects from observed data is an important problem in many areas, such as social sciences, biological sciences, and economics. A critical challenge in estimating causal effect arises from the presence of unobserved confounders. To tackle this problem, existing works relied on additional measurements, such as instrumental variables \citep{pokropek2016introduction}, proxy variables \citep{miao2018identifying}, and additional treatments \citep{wang2019blessings}.

In particular, the proximal causal learning \citep{miao2018identifying, tchetgen2020introduction,cui2023semiparametric} leverages two proxy variables - a treatment-inducing proxy and an outcome-inducing proxy - to account for unmeasured confounders. With such proxies, one can identify the causal effect \citep{cui2023semiparametric} by estimating nuisance parameters. However, these methods required to pre-specify proxy variables, which may not be feasible in real applications. Recently, another deconfounding framework was proposed \cite{wang2019blessings}, by exploiting multiple treatments to estimate the confounders to adjust for the confounding bias. Based on this framework, \cite{wang2021proxy, miao2022identifying} further associated the proxy variables with other treatments for identification. 

However, in many real scenarios, we are more interested in multiple outcomes rather than a single isolated outcome. Besides, these outcomes are often coupled with multiple treatments. To illustrate, consider the Intensive Care Unit (ICU) scenario with sepsis disease \citep{johnson2016mimic} as a motivating example, where healthcare providers may monitor various parameters, such as White blood cell count, Mean blood pressure, and Platelets, to assess the effectiveness of therapeutic treatments including Norepinephrine, Morphine Sulfate, and Vancomycin. Such scenarios are often encountered but were ignored in the literature. To fill in this blank, we introduce a novel setting: \emph{multiple treatments and multiple outcomes}, where we consider the case when treatments are continuous.

Our setting is a natural extension to the \emph{multiple treatments} setting \citep{wang2019blessings} in the sense that the shared confounder of multiple treatments also affects multiple outcomes recorded, as evidenced by extensive examples in Sec.~\ref{sec.pre}. Here, we revisit the ICU example for illustration. In this example, the unobserved confounder can refer to the health outcomes at the previous stage, which not only determine the therapy dosage but also can affect outcomes at the next stage. 

Moreover, having multiple outcomes of interest can aid in the mutual identification of each other. Concretely, we show that there can always exist two admissible proxies that can guarantee the identifiability for each outcome under treatment effect study. This conclusion holds as long as there exist missing edges in the bipartite graph between treatments and outcomes, which can naturally hold since some treatments may only impact a few outcomes. Back to the ICU example, morphine may have no obvious influence on the White Blood Cell Counts \citep{anand2004effects, degrauwe2019influence}. Under this guarantee, we identify such proxies via hypothesis testing for causal edges between each treatment and outcome. With such identified proxies, we estimate the treatment effect with a kernel-based proximal doubly robust estimator that can well handle continuous treatment effects. As a contrast, multiple treatments with a single outcome  \citep{wang2019blessings} can suffer from larger biases due to the non-identifiability of latent confounders giving rise to multiple treatments. To demonstrate the utility and effectiveness of our method, we apply it both to a synthetic dataset and Medical Information Mart for Intensive Care (MIMIC III) \citep{johnson2016mimic}. 


It is interesting to note that recently \citep{zhou2020promises} also studied the multiple outcomes with only a single treatment. However, it relied on the \emph{no qualitative $U$-$A$ interaction} assumption ($U,A$ \emph{resp.} denote latent confounders and treatment), which may not hold when the outcome model is complex.

\textbf{Contributions.} To summarize, our contributions are: 
\begin{enumerate}[leftmargin=*]
    \item We introduce a new setting that involves multiple treatments and multiple outcomes, which can accommodate many real scenarios. 
    \item We show that this setting facilitates causal identification under mild conditions. 
    \item We employ hypothesis testing via a proper discretization of treatments to identify proxies.
    \item We demonstrate the utility of our methods on both synthetic and real-world data. 
\end{enumerate}


\vspace{-0.25cm}
\section{Related works}
\vspace{-0.15cm}

\label{sec.related}

\textbf{Causal Inference with Multiple Treatments.} \cite{wang2019blessings} and follow-up studies \citep{d2019comment,d2019multi,imai2019comment,miao2022identifying} considered the scenario of multiple treatments with shared confounding structure, to adjust for confounding bias using factor models. Based on this framework, \cite{wang2021proxy} further leveraged the proxy variables for identification, where certain treatments themselves act as proxies for other treatments. However, this assumption may not hold when all treatments affect the outcome. In this paper, we consider a natural extension from a single outcome to multiple outcomes under the multiple treatments setting, which facilitates the identification by exploiting certain treatments and outcomes to be proxies.

\textbf{Causal Inference with Multiple Outcomes.} 
Multiple outcomes are common in randomized controlled trial and recent research has explored the analysis of treatment effects in such cases \citep{lin2000scaled,roy2003scaled}. \cite{kennedy2019estimating} propose scaled effect measures method to estimate the effects of multiple outcomes. Additionally, \cite{yao2022efficient} suggested leveraging data from multiple outcomes to estimate treatment effects. In a recent work, \citep{zhou2020promises} show nonparametric identifiability under the assumption of conditional independence among at least three parallel outcomes. However, this relied on the \emph{no qualitative $U$-$A$ interaction} assumption, which may not hold when the outcome model of $Y|U,A$ is complex. \textbf{In contrast}, our method leverages multiple treatments that are often coupled with outcomes recorded, which provides proxies for identification in a more flexible way.

\textbf{Proximal Causal Learning for Identifiability.} To estimate the treatment effect on a single outcome in the presence of latent confounders, \cite{miao2018identifying, tchetgen2020introduction, cui2023semiparametric} proposed to use two proxy variables of unobserved confounders for causal identification. With such proxies, one should solve for nuisance/bridge functions under completeness assumptions, which were then used in the doubly robust estimator \citep{cui2023semiparametric} for causal estimations. However, these works pre-specified the proxies. Furthermore, their emphasis was primarily on binary treatments. In contrast, in our scenario, our method enjoys the capability to identify proxy variables through causal discovery and estimate causal effects for continuous treatments.

\vspace{-0.25cm}
\section{Multiple treatments and multiple outcomes}
\vspace{-0.15cm}
\label{sec.pre}

\textbf{Problem setup $\&$ Notations.} We consider the setting of causal inference with \textit{multiple treatments and multiple outcomes}. Specifically, our data is composed of $n$ i.i.d samples $\{\mathbf{a}^{i},\mathbf{x}^i,\mathbf{y}^{i}\}_{i=1:n}$ with covariates $\mathbf{X}$, $I$ ($I > 1$) treatments $\mathbf{A}=[A_1,\cdots,A_I]$,  $J$ ($J > 1$) outcomes $\mathbf{Y}:=[Y_1,\cdots,Y_J]$ that recorded after treatments are received. Here, we assume that all treatments are continuous. We denote $[m]:=\{1,\cdots,m\}$ for any integer $m > 0$. For a subset $S \subseteq [m]$, we denote $\mathbf{O}_S:=\{O_i|i\in S\}$ as the subset of $\mathbf{O}$ for any variables $O: \Omega \to \mathbf{R}^m$ that can denote $\mathbf{A}$, $\mathbf{X}$, and $\mathbf{Y}$ in our setting. Correspondingly, we denote $\mathbf{O}_{-S}:=\{O_i|i\notin S\}$ as the complementary set of $\mathbf{O}_S$. Besides, we respectively use $\mathbb{E}[\cdot]$ and $\mathbb{P}(\cdot)$ to denote the expectation and the probability distribution of a random variable. For any discrete variables $X$ and $Y$ with $K$ and $L$ categories, we define $\mathcal{P}(X | y) := [\mathbb P(x_1| y),\cdots,\mathbb P(x_K| y)]^\top$, $\mathcal P(x | Y):= [\mathbb P(x | y_1),\cdots,\mathbb P(x | y_l)]$ and $\mathcal{P}(X | Y):=[\mathcal{P}(X|y_1),\cdots,\mathcal{P}(X|y_L)]$ as probability matrices. 




\begin{figure}[htbp!]
    \centering
    \includegraphics[width=1\textwidth]{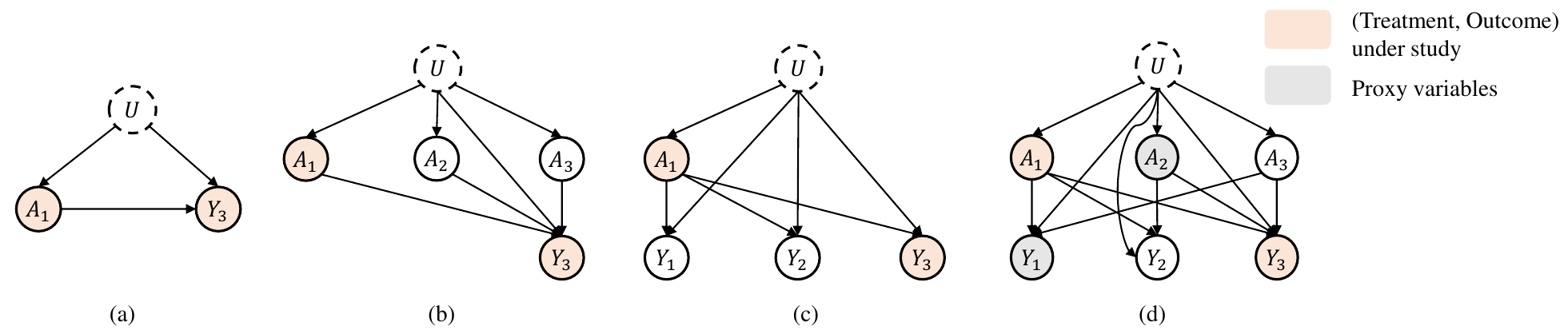}
    \caption{Different settings of causal inference: (a). single treatment and single outcome; (b) multiple treatments and single outcome; (c) single treatment and multiple outcomes; (d) ours with multiple treatments and multiple outcomes. $\mathbf{A}$, $\mathbf{Y}$, $\mathbf{U}$ respectively denote treatments, outcomes, and the unobserved confounder marked by the dotted circle. We mark the pair of (treatment, outcome) involved in the causal effect study as blue and corresponding proxy variables as gray.}
    \label{fig:dag}
\end{figure}

Our goal is to estimate the average treatment effect (ATE) for multiple treatments of interest. Specifically, for each outcome $Y_j$ we would like to estimate: 
\begin{align}
\label{eq:ate}
    \mathbb{E}[Y_j|\mathrm{do}(\mathbf{a}_{\mathcal{S}})],
\end{align}
$\forall \mathbf{A}_S \subset \mathbf{A}$. Note that our setting in Fig.~\ref{fig:dag} (d) is a natural extension from the \textit{multiple treatments} setting \citep{wang2019blessings} in Fig.~\ref{fig:dag} (b), from a single outcome to multiple outcomes. In many real-world scenarios, instead of a single experiment, it is more interesting to evaluate the effectiveness of treatments/interventions to multiple outcomes \citep{yao2022efficient} of interest. To illustrate, consider the Intensive Care Unit (ICU) scenario with sepsis disease \citep{johnson2016mimic} as a motivating example of Fig.~\ref{fig:dag} (d), where we can not only observe multiple treatments such as Norepinephrine ($A_1$), Morphine Sulfate ($A_2$), and Vancomycin ($A_3$) but also record health indicators such as White blood cell
count ($Y_1$), Mean blood pressure ($Y_2$), and Platelets ($Y_3$). As shown in our learned causal graph in Fig.~\ref{fig:dag} (d) that aligns well with existing findings, different health indicators are associated with different subsets of treatments. As each health indicator represents a crucial aspect of sepsis, our focus is on examining the treatment effects associated with each of them. Another example is the recommendation system \citep{yao2022efficient}, where companies conducted A/B experiments to test the effectiveness of website layout changes towards multiple metrics including user retention, and server CPU usage. Note that \cite{zhou2020promises} also considered the multiple outcomes setting, but it was only with a single treatment, as shown in Fig.~\ref{fig:dag} (c). This setting may not capture the scenarios where multiple treatments together are under study, such as the ICU scenario mentioned earlier.

The additional incorporation of multiple outcomes is not only of practical interest but can also facilitate the identification of causal effect in Eq.~\ref{eq:ate} when the ignorability condition fails. This is because, for each pair of $(\mathbf{A}_S,Y_j)$ under study, we can leverage other treatments and outcomes as proxy variables that can identify the causal effect under unobserveness \citep{miao2018identifying}. As shown in Fig.~\ref{fig:dag}, we can respectively use $A_2$ and $Y_1$ as outcome-induced and treatment-induced proxies that are sufficient to identify the causal effect of $A_1$ to $Y_3$. In contrast, the \textit{multiple treatments} setting (Fig.~\ref{fig:dag} (b)) \citep{wang2021proxy, wang2019blessings}
required the existence of a null proxy that has no effect on the outcome, which may not always be satisfied for a fixed outcome. On the other hand, the \textit{multiple outcomes} setting (Fig.~\ref{fig:dag} (c)) relied on the \emph{no qualitative $U$-$A$ interaction} assumption, which may not hold when the outcome model of $Y|U,A$ is complex.

We first introduce some basic assumptions that our method to identify Eq.~\ref{eq:ate} is built upon. 

\begin{assumption}[Structural Causal Model with Shared Confounding]
\label{assum:dag} 
We assume a structural causal model (SCM) \citep{pearl2009causality} over $\{\mathbf{U}, \mathbf{A}, \mathbf{X}, \mathbf{Y}\}$, where we assume that \textbf{(i)} $\mathbf{U}$ denotes unobserved confounders that shared among all treatments and outcomes: $Y_j(\mathbf{a}) \perp \mathbf{A} | \mathbf{X}, \mathbf{U}$ for each $j$, where $Y_j(\mathbf{a})$ denotes the potential outcome with $\mathbf{A}=\mathbf{a}$ received; \textbf{\emph{(ii)}} there are no directed edges between treatments: $A_{i_1} \perp A_{i_2} | \mathbf{X}, \mathbf{U}$ for any $i_1 \neq i_2$, or between outcomes: $Y_{j_1}(\mathbf{a}) \perp Y_{j_2}(\mathbf{a}) | \mathbf{X}, \mathbf{U}$ for any $j_1 \neq j_2$ and $\mathbf{a}$.
\end{assumption}

Assumption~\ref{assum:dag} \textbf{(i)} is similar to and naturally remains valid as long as the shared confounding assumption among multiple treatments \citep{wang2019blessings} holds. This is because the unobserved confounder $U$ often represents underlying factors that can impact various interconnected outcomes. To illustrate, consider the examples in \cite{wang2019blessings}. 
\begin{itemize}[leftmargin=*]
    \item \textbf{Genome-wide association studies (GWAS)}. The goal is to evaluate the effect of genes on a specific trait, where $\mathbf{U}$ refers to shared ancestry that can also affect other related traits. 
    \item \textbf{Computational neuroscience}. The causes are multiple measurements of the brain’s activity. The effect is a measured behavior. $\mathbf{U}$ refers to dependencies among neural activity, which can affect multiple behaviors and thoughts. 
    \item \textbf{Social Science}. The policymakers evaluated the effect of social programs on social outcomes that include poverty levels and upward mobility \citep{morgan2015counterfactuals}, which can be simultaneously affected by the program preferences, \emph{i.e.}, $\mathbf{U}$. 
    \item \textbf{Medicine}. The confounder $\mathbf{U}$ refers to treatment preferences, which can simultaneously affect health outcomes/indicators that are linked to those treatments.
\end{itemize}

Note that in some other cases (\emph{e.g.}, the sepsis disease in the ICU), $\mathbf{U}$ can include the outcomes recorded at the last step, which can also affect the treatments and outcomes at the next step.

In addition, the absence of direct causal relationships between outcomes or between treatments, as stipulated in assumption~\ref{assum:dag} \textbf{(ii)}, naturally applies to many scenarios. These scenarios include the aforementioned ones in which outcomes are only affected by treatments and covariates while treatments are only affected by covariates. Additionally, it encompasses other examples where treatments can be additionally affected by outcomes recorded at the last step.

Next, we introduce the \emph{null-proxy} assumption that guarantees the identification of causal effects by exploiting other treatments and outcomes.

\begin{assumption}[Null-proxy for $\mathbf{A}_S$ and $Y_j$]
\label{assum:null-proxy}
We assume that at least one of the following holds: \textbf{(i)} $|\mathbf{Y}| \geq 3$ and there is at least one missing edge from $\mathbf{A}_{S}$ to $\mathbf{Y}_{-j}$; \textbf{(ii)} $|\mathbf{A}_{-S}| \geq 2$ and there exists at least one missing edge from $\mathbf{A}_{-S}$ to $\mathbf{Y}_{j}$; \textbf{(iii)} there is at least one missing edge from $\mathbf{A}_{-S}$ to $\mathbf{Y}_{-j}$.
\end{assumption}

\begin{remark}
\label{remark:null}
Note that If we are interested in single treatment effect, the above assumption holds for all $i \in [I]$ and $j \in [J]$ as long as \textbf{(i)} and \textbf{(ii)} are respectively required in \cite{zhou2020promises} and \cite{wang2021proxy}. If these conditions fail, we can also identify the ATE as long as there are at most $IJ -2 $ edges in the bipartite graph between $\mathbf{A}$ and $\mathbf{Y}$. 

\end{remark}

\begin{remark}
    We will show later that this assumption can be empirically tested from data. 
\end{remark}

Roughly speaking, this assumption implies the presence of missing edges from $\mathbf{A}$ and $\mathbf{Y}$, as long as the bipartite graph between them is not overly dense. It can naturally hold since in scenarios with multiple treatments and multiple outcomes, some outcomes may be only associated with a few treatments. To illustrate, consider the GWAS example in which each trait can be associated with only a few genes, and Fig.~\ref{fig:dag} (d) in the example of sepsis disease in ICU where White blood cell count ($Y_1$) is only associated with Norepinephrine ($A_1$) and Vancomycin ($A_3$). To understand how this assumption affects the ATE in Eq.~\ref{eq:ate}, it is important to note that two variables within the $\mathbf{A}_{-S} \cup \mathbf{Y}_{-j}$, characterized by the presence of missing edges, can serve as valid proxies for Eq.~\ref{eq:ate} to be identifiable. Compared to \cite{wang2021proxy} that assumed the existence of null proxies for a single outcome, our assumption is easier to satisfy by exploiting other treatments and outcomes provided as proxies. Back to the example of sepsis disease in Fig.~\ref{fig:dag} (d), the null proxy does not exist (as illustrated in Fig.~\ref{fig:dag} (b)) if we only consider Platelets ($Y_3$) as a single outcome; while we can take $A_2$ and $Y_1$ as proxies since $A_2$ does not affect $Y_1$. Recently, an alternative null treatment approach was introduced \citep{miao2022identifying}. However, it required at least half of the treatments do not causally affect the outcome, which is much more stringent than ours.

To guarantee identifiability for each outcome, we additionally require \emph{positivity} assumption that was commonly made in proximal causal inference \citep{miao2018identifying, cui2023semiparametric}. 

\begin{assumption}[Consistency and Positivity]
\label{assum:pos}
(i) $Y^{(A,Z)}=Y$, (ii) $0<p(A=a|U,X)<1$ a.s.
\end{assumption}

\vspace{-0.25cm}
\section{Identifibility}
\vspace{-0.15cm}
\label{sec.iden}

We show that Eq.~\ref{eq:ate} is identifiable based on proximal causal learning. Different from previous works that pre-specify proxies, we propose to select admissible proxies based on causal discovery. 

\vspace{-0.25cm}
\subsection{Identification with Proxies}
\vspace{-0.15cm}
\label{sec.identifiability}

We first show that as long as assump.~\ref{assum:null-proxy} holds, there always exist two admissible proxies namely $W_{S,j}, Z_{S,j}$ that can help to identify Eq.~\ref{eq:ate}. According to \cite{miao2018identifying}, such two proxies should adhere to the following conditional independence:
\begin{align}
    & Z_{S,j}\perp Y_j \mid \mathbf{A}_S,\mathbf{A}_{-S}\backslash W_{S,j}, \mathbf{U}, \label{eq:z}  \\
    & \left(Z_{S,j}, A_S\right) \perp W_{S,j} \mid \mathbf{A}_{-S}\backslash W_{S,j}, \mathbf{U}, \label{eq:w} 
\end{align}
as illustrated in Fig.~\ref{fig:dag} (d) by taking $S:=\{1\}, j := 3$, $W_{S,j} = A_2$ and $Z_{S,j} = Y_1$. Note that here we have omitted $\mathbf{X}$ in Eq.~\ref{eq:z},~\ref{eq:w} since they are observed confounding variables and can thus be easily adjusted for. Such conditions imply that there is no directed edge between $Z_{S,j}$ and $W_{S,j} \cup Y_j$. In fact, these conditions can be guaranteed by assump.~\ref{assum:null-proxy}, as shown in the following lemma. 

\begin{restatable}{lemma}{idenproxy}
\label{lemma:iden-proxy}
Suppose Assumptions~\ref{assum:dag} and~\ref{assum:null-proxy} hold for $\mathbf{A}_S \subset \mathbf{A}$ and $Y_j \in \mathbf{Y}$. Then, there exist two admissible proxies $(Z_{S,j},W_{S,j})$ that satisfy Eq.~\ref{eq:z},~\ref{eq:w}.
\end{restatable}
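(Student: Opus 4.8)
The plan is to prove Lemma~\ref{lemma:iden-proxy} by a case analysis matching the three cases of Assumption~\ref{assum:null-proxy}, in each case exhibiting an explicit pair $(Z_{S,j}, W_{S,j})$ drawn from $\mathbf{A}_{-S} \cup \mathbf{Y}_{-j}$ and verifying the two conditional-independence requirements Eq.~\ref{eq:z} and Eq.~\ref{eq:w} directly from the SCM structure guaranteed by Assumption~\ref{assum:dag}. The key structural facts I would extract first, before splitting into cases, are: (1) by Assumption~\ref{assum:dag}(i), every outcome $Y_\ell$ is a function only of $(\mathbf{A}$-parents of $Y_\ell$, $\mathbf{X}, \mathbf{U}, \text{noise})$ with noise independent across $\ell$; (2) by Assumption~\ref{assum:dag}(ii), there are no edges among treatments and none among outcomes, so conditioning on $(\mathbf{X},\mathbf{U})$ renders all treatments mutually independent and, given also the relevant treatments, renders all outcomes mutually independent; (3) a treatment $A_i$ is a root variable given $(\mathbf{X},\mathbf{U})$, so it is trivially independent of any outcome not adjacent to it, given $(\mathbf{X},\mathbf{U})$ and the other treatments. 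These three facts reduce each verification to checking, via d-separation in the SCM (or equivalently via the functional form), that the chosen $Z_{S,j}$ has no directed path to $W_{S,j}\cup Y_j$ once we condition on $\mathbf{A}_{-S}\setminus W_{S,j}$ and $\mathbf{U}$ (and implicitly $\mathbf{X}$).

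Concretely: in case (i), $|\mathbf{Y}|\ge 3$ and some $Y_{j'}\in\mathbf{Y}_{-j}$ is missing an edge from $\mathbf{A}_S$; I would set $Z_{S,j} := Y_{j'}$ and pick $W_{S,j} := Y_{j''}$ to be a third outcome distinct from $Y_j, Y_{j'}$. Then Eq.~\ref{eq:z} holds because $Y_{j'}$ and $Y_j$ are distinct outcomes, hence conditionally independent given all treatments and $\mathbf{U}$ by fact (2), and we can afford to drop $A_S$ from the conditioning set on the $Y_{j'}$ side precisely because of the missing $\mathbf{A}_S \to Y_{j'}$ edge; Eq.~\ref{eq:w} holds because $W_{S,j}=Y_{j''}$ is a third, independent outcome and $(Z_{S,j}, A_S)$ is a function of treatments, $\mathbf{U}$, $\mathbf{X}$ and the noise of $Y_{j'}$, all independent of the noise of $Y_{j''}$ given $(\mathbf{A}, \mathbf{X}, \mathbf{U})$. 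In case (ii), $|\mathbf{A}_{-S}|\ge 2$ with some $A_{i'}\in\mathbf{A}_{-S}$ missing an edge to $Y_j$; here I would take $W_{S,j} := A_{i'}$ and $Z_{S,j} := A_{i''}$ for another $A_{i''}\in\mathbf{A}_{-S}\setminus\{A_{i'}\}$. Eq.~\ref{eq:z} then says $A_{i''}\perp Y_j \mid \mathbf{A}_S, \mathbf{A}_{-S}\setminus A_{i'}, \mathbf{U}$ — but $A_{i''}$ is in the conditioning set on the left, so this is immediate (a variable is independent of anything given itself). Eq.~\ref{eq:w} says $(A_{i''}, A_S)\perp A_{i'}\mid \mathbf{A}_{-S}\setminus A_{i'}, \mathbf{U}$, which holds by fact (2): all treatments are mutually independent given $(\mathbf{X},\mathbf{U})$. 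In case (iii), some $A_{i'}\in\mathbf{A}_{-S}$ is missing an edge to some $Y_{j'}\in\mathbf{Y}_{-j}$; I set $W_{S,j} := A_{i'}$ and $Z_{S,j} := Y_{j'}$. Eq.~\ref{eq:z} reads $Y_{j'}\perp Y_j\mid \mathbf{A}_S, \mathbf{A}_{-S}\setminus A_{i'}, \mathbf{U}$, which holds by fact (2) because $Y_{j'}$ does not depend on $A_{i'}$ (missing edge) so the reduced treatment set still determines it, and distinct outcomes are conditionally independent; Eq.~\ref{eq:w} reads $(Y_{j'}, A_S)\perp A_{i'}\mid \mathbf{A}_{-S}\setminus A_{i'},\mathbf{U}$, which holds because $Y_{j'}$ is a function of $\mathbf{A}_{-S}\setminus A_{i'}$, $\mathbf{A}_S$, $\mathbf{X}$, $\mathbf{U}$ and its own noise (no dependence on $A_{i'}$), and all treatments are mutually independent given $(\mathbf{X},\mathbf{U})$.

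I would phrase each verification as a short d-separation argument in the DAG of the SCM: for Eq.~\ref{eq:z}, the only potential active paths between $Z_{S,j}$ and $Y_j$ go through $\mathbf{U}$ or through a shared treatment parent, all of which are blocked by conditioning on $\mathbf{U}$ together with the appropriate treatments — and the "appropriate treatments" can legitimately omit $W_{S,j}$ and (in cases i, iii) $\mathbf{A}_S$ exactly because of the stipulated missing edge; for Eq.~\ref{eq:w}, $W_{S,j}$ is always either a treatment (a root node given $\mathbf{X},\mathbf{U}$, hence independent of everything else among the other treatments and of any outcome it doesn't point to) or a third outcome whose exogenous noise is fresh. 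The one place requiring care — and the step I expect to be the main obstacle in writing it up cleanly — is making sure, in cases (i) and (iii) where $\mathbf{A}_S$ is dropped from the conditioning set on the $Z$-side of Eq.~\ref{eq:z}, that this dropping does not open a back-door path: the missing-edge hypothesis guarantees $\mathbf{A}_S$ is not a parent of $Z_{S,j}$, but one must also check $\mathbf{A}_S$ is not a collider or an ancestor of a collider on some path between $Z_{S,j}$ and $Y_j$ that becomes active — this is ruled out by Assumption~\ref{assum:dag}(ii) (treatments have no outcome-children feeding back, given the DAG orientation $\mathbf{A},\mathbf{X}\to\mathbf{Y}$ with $\mathbf{U}$ a common cause) but deserves an explicit sentence. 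A secondary subtlety is the potential-outcome form of Eq.~\ref{eq:z},~\ref{eq:w}: I would note that under Assumption~\ref{assum:dag}(i)–(ii) and consistency, these potential-outcome independences follow from the corresponding structural (observational) conditional independences in the SCM, so the d-separation reasoning transfers.
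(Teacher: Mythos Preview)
Your overall case-by-case strategy matches the paper's approach, and your handling of case (iii) is correct. However, there is a concrete gap in your case (i) verification of Eq.~\ref{eq:w}. You take $Z_{S,j}=Y_{j'}$ (the outcome missing the $\mathbf{A}_S$-edge) and $W_{S,j}=Y_{j''}$ (an arbitrary third outcome), and then argue that $(Y_{j'},\mathbf{A}_S)\perp Y_{j''}$ because the noise of $Y_{j''}$ is fresh ``given $(\mathbf{A},\mathbf{X},\mathbf{U})$''. But the conditioning set in Eq.~\ref{eq:w} is $\mathbf{A}_{-S}\setminus W_{S,j},\mathbf{U}$, which here equals $\mathbf{A}_{-S},\mathbf{U}$ since $W_{S,j}$ is an outcome --- it does \emph{not} contain $\mathbf{A}_S$. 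Nothing in case (i) rules out $\mathbf{A}_S\to Y_{j''}$, and if that edge is present it is an unblocked path between $\mathbf{A}_S$ and $W_{S,j}$, so Eq.~\ref{eq:w} fails outright. The repair is to swap the roles: the outcome with the missing $\mathbf{A}_S$-edge must be $W_{S,j}$ (Eq.~\ref{eq:w} is precisely the requirement that $W$ be separated from $\mathbf{A}_S$ \emph{without} conditioning on $\mathbf{A}_S$), while any remaining outcome in $\mathbf{Y}_{-j}$ can serve as $Z_{S,j}$ (Eq.~\ref{eq:z} \emph{does} condition on $\mathbf{A}_S$, so outcome independence given $(\mathbf{A},\mathbf{U})$ suffices regardless of which edges from $\mathbf{A}_S$ are present). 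Your aside about ``dropping $A_S$ from the conditioning set on the $Y_{j'}$ side'' for Eq.~\ref{eq:z} is likewise misplaced: $\mathbf{A}_S$ is already in that conditioning set, and the missing edge is what makes Eq.~\ref{eq:w} go through, not Eq.~\ref{eq:z}.

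A related but milder issue appears in case (ii). You set $W_{S,j}=A_{i'}$ (the treatment with no edge to $Y_j$) and $Z_{S,j}=A_{i''}$, then verify Eq.~\ref{eq:z} by the triviality that $A_{i''}$ lies in its own conditioning set $\mathbf{A}_{-S}\setminus A_{i'}$. Under a literal reading of the equations this is formally valid, but it leaves the missing-edge hypothesis of Assumption~\ref{assum:null-proxy}(ii) completely unused and yields a degenerate $Z$ that is conditioned away --- which cannot support the completeness conditions needed downstream. The assignment consistent with the proximal roles (and with how the paper uses these proxies in Theorem~\ref{thm:iden}) is the reverse: $Z_{S,j}=A_{i'}$, so that $Z\perp Y_j$ holds non-trivially via the missing edge $A_{i'}\not\to Y_j$, and $W_{S,j}=A_{i''}$, for which Eq.~\ref{eq:w} follows from mutual independence of treatments given $\mathbf{U}$. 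In short, in both cases (i) and (ii) you have $Z$ and $W$ reversed; the mnemonic is that Eq.~\ref{eq:w} omits $\mathbf{A}_S$ from its conditioning set (so $W$ is the variable that must avoid edges with $\mathbf{A}_S$), while Eq.~\ref{eq:z} includes $\mathbf{A}_S$ but asks for separation from $Y_j$ (so $Z$ is the variable that must avoid an edge to $Y_j$).
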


\begin{remark}
    Intuitively, Lemma.~\ref{lemma:iden-proxy} is guaranteed by the condition of the missing directed edges, \emph{i.e.}, assump.~\ref{assum:null-proxy}. Specifically, if $J \geq 3$, any two $Y_{j_1}, Y_{j_2} \not\in \mathbf{Y}_{-j}$ can be taken as $Z_{S,j}$ and $W_{S,j}$ since we have assumed that outcomes do not interact with each other. Otherwise, when there exists missing edges from $\mathbf{A}_{S}$ to $\mathbf{Y}$, we can also find $(Z_{S,j},W_{S,j})$. Please refer to the proofs for details. 
\end{remark}

This condition means we can find proxies from other treatments and outcomes. We will show in the next section how to identify proxies $Z_{S,j}$ and $W_{S,j}$ via causal discovery. Finally, we need the completeness assumptions for two bridge functions for causal identification.

\begin{assumption}
\label{assum:bridge}
Let $\nu$ denote any square-integrable function. Then for any $S$ and $j \in J$, we have 
\begin{enumerate}[noitemsep,topsep=0.1pt,leftmargin=0.4cm]
    \item (Completeness for outcome bridge functions). We assume that 
    $\mathbb{E} [\nu (\mathbf{U}) | \mathbf{a}_S,W_{S,j},\mathbf{x}]=0$ and $\mathbb{E} [\nu (Z_{S,j}) | \mathbf{a}_S,W_{S,j},\mathbf{x}]=0$ for all $(\mathbf{a}_S,\mathbf{x})$ iff $\nu(\mathbf{U})=0$ almost surely.
    \item (Completeness for treatment bridge functions). We assume that 
    $\mathbb{E} [\nu (\mathbf{U}) | \mathbf{a}_S,Z_{S,j},\mathbf{x}]=0$ and $\mathbb{E} [\nu (W_{S,j}) | \mathbf{a}_S,Z_{S,j},\mathbf{x}]=0$ for all $(\mathbf{a}_S,\mathbf{x})$ iff $\nu(\mathbf{U})=0$  almost surely.
\end{enumerate}
\end{assumption}

Assump.~\ref{assum:bridge}, which has been widely adopted in the literature of causal inference \citep{miao2018identifying, cui2023semiparametric, tchetgen2020introduction}, is necessary to guarantee the existence and the uniqueness of solutions to integral equations. Similar to \cite{cui2023semiparametric}, we also need regularity conditions that are left in the appendix. Under these assumptions, we have the following identifiability result:

\begin{restatable}{theorem}{Identifiability}
\label{thm:iden}
Suppose assump.~\ref{assum:dag}-\ref{assum:pos},~\ref{assum:bridge} hold for $\mathbf{A}_S$ and $Y_j$. Then $\mathbb{E}[Y_j | \mathrm{do}(\mathbf{a}_{S})]$ is identifiable. 
\end{restatable}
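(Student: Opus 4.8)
The plan is to reduce the identifiability of $\mathbb{E}[Y_j \mid \mathrm{do}(\mathbf{a}_S)]$ to the proximal causal identification result of \cite{miao2018identifying, cui2023semiparametric}, applied to the triple consisting of the treatment block $\mathbf{A}_S$, the treatment-inducing proxy $Z_{S,j}$, and the outcome-inducing proxy $W_{S,j}$, while treating $\mathbf{X}$ together with $\mathbf{A}_{-S}\setminus W_{S,j}$ as observed conditioning variables. First I would invoke Lemma~\ref{lemma:iden-proxy} to obtain the two admissible proxies $(Z_{S,j}, W_{S,j})$ satisfying the conditional independences in Eq.~\ref{eq:z} and Eq.~\ref{eq:w}. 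Together with Assumption~\ref{assum:dag}\textbf{(i)} (which gives the latent ignorability $Y_j(\mathbf{a}) \perp \mathbf{A} \mid \mathbf{X},\mathbf{U}$) and Assumption~\ref{assum:pos} (consistency and positivity), these are exactly the structural prerequisites of the proximal framework: $\mathbf{U}$ is the unmeasured confounder, $Z_{S,j}$ is conditionally independent of $Y_j$ given $(\mathbf{A}_S,\mathbf{U})$ and the remaining conditioning set, and $W_{S,j}$ is conditionally independent of $(\mathbf{A}_S, Z_{S,j})$ given $\mathbf{U}$ and that set.

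Next I would introduce the two bridge functions. The outcome bridge function $h(\mathbf{a}_S, W_{S,j}, \mathbf{x})$ is defined as a solution to the Fredholm integral equation
\begin{equation}
\mathbb{E}\bigl[ Y_j \mid \mathbf{a}_S, Z_{S,j}, \mathbf{x} \bigr] = \mathbb{E}\bigl[ h(\mathbf{a}_S, W_{S,j}, \mathbf{x}) \mid \mathbf{a}_S, Z_{S,j}, \mathbf{x} \bigr],
\end{equation}
and the treatment bridge function $q(\mathbf{a}_S, Z_{S,j}, \mathbf{x})$ solves the adjoint equation with the roles of $Z_{S,j}$ and $W_{S,j}$ swapped and a density-ratio-type right-hand side, exactly as in \cite{cui2023semiparametric}. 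The completeness conditions in Assumption~\ref{assum:bridge} guarantee existence and uniqueness of solutions to these integral equations (existence via the completeness-plus-regularity argument in the appendix, uniqueness because the homogeneous equation forces $\nu(\mathbf{U})=0$ a.s.). I would then verify that the key bridge identity holds: $\mathbb{E}[ h(\mathbf{a}_S, W_{S,j}, \mathbf{x}) \mid \mathbf{U}, \mathbf{x}] = \mathbb{E}[Y_j(\mathbf{a}_S) \mid \mathbf{U}, \mathbf{x}]$, which follows by taking conditional expectations of the defining equation against $\mathbf{U}$ and using the conditional independence Eq.~\ref{eq:z} to remove $Z_{S,j}$, then invoking completeness to cancel the conditioning on $Z_{S,j}$.

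From the bridge identity the identification formula follows by averaging: $\mathbb{E}[Y_j \mid \mathrm{do}(\mathbf{a}_S)] = \mathbb{E}_{\mathbf{X},\mathbf{U}}\bigl[\mathbb{E}[Y_j(\mathbf{a}_S)\mid \mathbf{U},\mathbf{X}]\bigr] = \mathbb{E}_{\mathbf{X},\mathbf{U}}\bigl[\mathbb{E}[h(\mathbf{a}_S, W_{S,j},\mathbf{X})\mid \mathbf{U},\mathbf{X}]\bigr] = \mathbb{E}[h(\mathbf{a}_S, W_{S,j}, \mathbf{X})]$, where the last step uses iterated expectations and the fact that the outer expectation over the observed distribution of $(\mathbf{X}, W_{S,j})$ coincides with the $\mathbf{U}$-marginalized version once we condition correctly; the doubly robust representation combining $h$ and $q$ then gives the symmetric formula. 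A subtlety I would address carefully is the bookkeeping of the conditioning set: in Eq.~\ref{eq:z}--\ref{eq:w} the block $\mathbf{A}_{-S}\setminus W_{S,j}$ appears, so throughout the argument this block must be carried along as part of the ``observed covariates'' alongside $\mathbf{X}$, and the positivity in Assumption~\ref{assum:pos} must be read as applying to $\mathbf{A}_S$ given everything else. The main obstacle is precisely this reduction step --- confirming that the conditional independences produced by Lemma~\ref{lemma:iden-proxy}, once $\mathbf{X}$ and $\mathbf{A}_{-S}\setminus W_{S,j}$ are folded into the conditioning set, match exactly the hypotheses under which \cite{miao2018identifying, cui2023semiparametric} establish proximal identification, so that their derivation transfers verbatim; everything after that is a citation plus the routine integral-equation manipulation.
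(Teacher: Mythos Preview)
Your proposal is correct and follows exactly the same strategy as the paper: invoke Lemma~\ref{lemma:iden-proxy} to secure the two admissible proxies $(Z_{S,j},W_{S,j})$ satisfying Eq.~\ref{eq:z}--\ref{eq:w}, and then feed these (together with Assumptions~\ref{assum:dag}--\ref{assum:pos} and the completeness conditions of Assumption~\ref{assum:bridge}) into the proximal identification machinery of \cite{miao2018identifying,cui2023semiparametric}. The paper's own proof is in fact a two-sentence citation of this reduction, whereas you have unpacked the bridge-function argument and the bookkeeping of $\mathbf{A}_{-S}\setminus W_{S,j}$ in the conditioning set; this extra detail is sound and does not depart from the paper's route.
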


In contrast to previous works such as \cite{miao2018identifying, wang2021proxy, miao2022identifying} that required the pre-specification of proxy variables, our method can identify the proxies by exploiting multiple treatments and multiple outcomes, which can be easily obtained in many real scenarios. Particularly, even when the condition of \emph{null proxy treatment} fails, Thm.~\ref{thm:iden} suggests that we can also identify the causal effect by exploiting multiple outcomes. Additionally, thanks to multiple treatments, our assumption is weaker than that of \cite{zhou2020promises}. 

In the next section, we show assump.~\ref{assum:null-proxy} can be tested and we can find proxies via causal discovery.

\vspace{-0.25cm}
\subsection{Identify proxies via causal discovery}
\vspace{-0.15cm}
\label{sec.causal-discovery}

Assump.~\ref{lemma:iden-proxy} and Thm.~\ref{thm:iden} show the existence of suitable proxies. In this section, we perform a hypothesis-testing approach to identify such proxies $W_{S,j}$ and $Z_{S,j}$ by learning the causal relations from $\mathbf{A}$ to $\mathbf{Y}$, building upon previous works by \cite{miao2018identifying,liu2023causal1,liu2023causal2}. With this approach, we can also test whether assump.~\ref{lemma:iden-proxy} holds. We denote $W_{S,j}, Z_{S,j}$ as $W, Z$ for brevity. 

Since we assume that the treatments do not interact with each other and the outcomes do not interact with each other, we can conduct the null hypothesis: $\mathbb{H}_0: A_i \perp Y_j | \mathbf{U}$, for each $A_i \in \mathbf{A}$ and $Y_j \in \mathbf{Y}$. Since treatments do not interact with each other, rejecting the null hypothesis indicates a direct causal edge from $A_i$ to $Y_j$. Similar to estimating ATE, previous works \citep{miao2018identifying} also required the prior specification of one proxy to carry out such hypothesis testing on discrete treatments. In our paper, we extend to continuous treatments; moreover, our method only requires one single proxy that could be taken from any $A_{i'} \neq A_i$. Next, we introduce some fundamental assumptions that are easy to hold in many scenarios.

\begin{assumption}[NULL-TV Lipschitzness]\label{assum.tv}
     For any $i \in [I]$, we assume $\mathbf{u}\mapsto \mathbb{P}(A_{i'}| A_i,\mathbf{U}=\mathbf{u})$ and $\mathbf{u}\mapsto \mathbb{P}(A_{i'}| \mathbf{U}=\mathbf{u})$ are Lipschitz continuous with respect to Total Variation (TV) distance. Mathematically speaking, $\forall \mathbf{u},\mathbf{u}'\in  \operatorname{supp}(\mathbf{U})$, we have
     $$
    \begin{aligned}
    \operatorname{TV}\left(\mathbb{P}(A_{i'} | A_i,\mathbf{U}=\mathbf{u}), \mathbb{P}(A_{i'} | A_i,\mathbf{U}=\mathbf{u}') \right) &\le L_{A_{i'}}|\mathbf{u}-\mathbf{u}'|; \\
    \operatorname{TV}\left( \mathbb{P}(A_{i'}\mid \mathbf{U}=\mathbf{u}),\mathbb{P}(A_{i'} | \mathbf{U}=\mathbf{u}') \right) &\le L_{A_{i'} | A_i}|\mathbf{u}-\mathbf{u}'|.
    \end{aligned}
    $$
\end{assumption}
This assumption is widely used in conditional independence testing or conditional density estimation \citep{warren2021wasserstein,neykov2021minimax,li2022minimax}, and has recently been employed in causal inference. Generally speaking, it restricts our attention to scenarios where the conditional distributions are appropriately smooth, which can hold in many scenarios such as \emph{Additive Noise Model} (ANM), Huber contamination model, exponential tilting model, \emph{etc} \footnote{\label{note_tv} Please refer to the appendix for more examples that satisfy the TV Lipschitzness.}. With such a smoothness, assump.~\ref{assum.tv} allows us to obtain conditional independence even after the discretizing continuous treatments. Specifically, we denote $N$ bins $\{\mathcal{U}_n\}_{n=1}^N$ and $\{\mathcal{A}_n\}_{n=1}^N$ as measurable partition of $\mathbf{U}$ and $A_{i'}$ in the hypothesis testing of $\mathbb{H}_0: A_i \perp Y_j | \mathbf{U}$. Denote $\overline{\mathbf{U}}$ and $\overline{A}_{i'}$ as discretized version of $\mathbf{U}$ and $A_{i'}$ such that $\overline{\mathbf{U}} = k$ iff $\mathbf{U} \in \mathcal{U}_k$. Then we have

\begin{align*}
   \mathbb{P}(a_{i'}\in \mathcal{A}_n | a_i) & = \sum_{k=1}^N \mathbb{P}(a_{i'}\in \mathcal{A}_n | \mathbf{U}\in \mathcal{U}_k,a_i)\mathbb{P}(\mathbf{U}\in \mathcal{U}_k | a_i) := \mathcal{P}(\mathbf{a}_{i'} \in \mathcal{A}_{n} | \overline{\mathbf{U}},a_i)\mathcal{P}(\overline{\mathbf{U}} | a_i), \\
   \mathbb{P}(Y_j\le y | a_i) & = \sum_{k=1}^N \mathbb{P}(Y_j\le y | \mathbf{U} \in \mathcal{U}_k,a_i)\mathbb{P}(\mathbf{U} \in  \mathcal{U}_k |  a_i):=\mathcal{P}(Y_j\le y | \overline{\mathbf{U}},a_i)\mathcal{P}(\overline{\mathbf{U}} |  a_i).
\end{align*}
When $N$ is large enough, we have {\small $\mathbb{P}(a_{i'}\in \mathcal{A}_n | \mathbf{U}\in \mathcal{U}_k,a_i) \overset{(1)}{\approx} \mathbb{P}(a_{i'}\in \mathcal{A}_n | \mathbf{U} = u,a_i) \overset{(2)}{=} \mathbb{P}(a_{i'}\in \mathcal{A}_n | \mathbf{U} = u) \overset{(3)}{\approx} \mathbb{P}(a_{i'}\in \mathcal{A}_n | \mathbf{U} \in \mathcal{U}_k)$} for any $u \in \mathcal{U}_k$, where ``(2)" is due to conditional independence between $A_{i'}$ and $A_i$. To see ``(1)" and ``(3)", we note that for $\mathbb{P}(a_{i'}\in \mathcal{A}_n | \mathbf{U}\in \mathcal{U}_k,a_i)$, 
\begin{align*}
    & \operatorname{TV}\left(\mathbb{P}(a_{i'}\in \mathcal{A}_n | \mathbf{U}\in \mathcal{U}_k,a_i),\mathbb{P}(a_{i'}\in \mathcal{A}_n | \mathbf{U} = u,a_i)\right) \\
    = & \operatorname{TV}\left(\frac{1}{\mathbb{P}(\mathbf{U}\in \mathcal{U}_k | a_i)} \int_{u' \in \mathcal{U}_k} \mathbb{P}(a_{i'}\in \mathcal{A}_n | u' ,a_i)\mathrm{d}\mathbb{P}(u'|a_i),\mathbb{P}(a_{i'}\in \mathcal{A}_n | \mathbf{U} = u,a_i)\right) \\
    = & \int_{\mathcal{U}_k}  \operatorname{TV}\left(\mathbb{P}(a_{i'}\in \mathcal{A}_n | \mathbf{U} = u',a_i), \mathbb{P}(a_{i'}\in \mathcal{A}_n | \mathbf{U} = u,a_i))\mathrm{d}\left( \frac{\mathbb{P}(u'|a_i)}{\mathbb{P}(\mathcal{U}_k | a_i)}\right) \right) \leq L|u'-u| \leq \varepsilon, 
\end{align*}
when $\mathrm{diam}(\mathcal{U}_k) \leq \frac{\varepsilon}{L}$ that can be satisfied as long as $N$ is large enough. We then have $\mathcal{P}(\overline{A}_{i'}|\overline{\mathbf{U}}) \approx \mathcal{P}(\overline{A}_{i'}|\overline{\mathbf{U}},a_i)$. Similarly, we have $\mathbb{P}(Y_j\le y | \overline{\mathbf{U}}, a_i) \approx \mathbb{P}(Y_j\le y | \overline{\mathbf{U}})$ when $\mathbb{H}_0$ holds. That implies conditional independence $Y_j \perp A_i | \mathbf{U} \in \mathcal{U}_k$ for each $k$, which is crucial for hypothesis testing. 


Similar to \cite{miao2018identifying, liu2023causal1}, for hypothesis testing, we assume that

\begin{assumption}
\label{assum:test1}
For each $a_i \in \mathrm{supp}(A_i)$, the matrix $\mathcal{P}(\overline{A}_{i'}|\overline{\mathbf{U}},a_i)$ is invertible.
\end{assumption}

Under assump.~\ref{assum:test1}, we have $\mathcal{P}(\overline{\mathbf{U}}|a_i) \approx \mathcal{P}(\overline{A}_{i'}|\overline{\mathbf{U}})^{-1}\mathcal{P}(\overline{A}_{i'}|a_i)$ and that
$$
\mathbb{P}(Y_j\le y| a_i) \approx \mathcal{P}(Y_j\le y | \overline{\mathbf{U}},a_i)\mathcal{P} (A_{i'} | \overline{\mathbf{U}})^{-1}\mathcal{P} (A_{i'}| a_i).
$$
As mentioned earlier, if $\mathbb{H}_0$ holds, $\mathbb{P}(Y_j\le y| \overline{\mathbf{U}}, a_i) \approx \mathbb{P}(Y_j\le y|\overline{\mathbf{U}})$. In this regard, $\mathcal{P} (\overline{A}_{i'}| a_i)$ is the only source of variability with respect to $a_i$. Similar to \cite{miao2018identifying}, we can therefore determine whether $\mathbb{H}_0$ holds by testing the linearity between $\mathbb{P}(Y_j\le y| a_i)$ and $\mathcal{P}(\overline{A}_{i'}| a_i)$. To test such a linearity, we should make sure that $\mathbb{P}(Y_j\le y| a_i)$ and $\mathcal{P}(\overline{A}_{i'}| a_i)$ can be estimated well, as shown in the following assumption that can easily hold via Maximum Likelihood Estimation (MLE). 
\begin{assumption}
\label{assum:test2}
    Denote $q_y:=\left\{\mathbb{P}(Y_j\le y | a_i)\right\}_{i=1}^M$ and $Q:=\left\{\mathcal{P}(\overline{A}_{i'} | a_i)\right\}_{i=1}^M$. Suppose we have available estimators $(\widehat{q}_y,\widehat{Q})$ that satisfy
    \begin{equation}\label{eq.q_and_Q}
        \sqrt{n}(\widehat{q}_y-q_y)  \overset{d}{\to} N(0,\Sigma_y)\quad
    \widehat{Q} \overset{p}{\to} Q, \widehat{\Sigma}_y \overset{p}{\to} \Sigma_y, \ \text{where $\widehat{\Sigma}_y$ and $\Sigma_y$ are positive-definite.}
    \end{equation}
\end{assumption}
Under assump.~\ref{assum:test2}, we can calculate the residue $\xi_y$ of regressing $\widehat{\Sigma}_y^{-1/2}\widehat{q}_{y}$ on $\widehat{\Sigma}_y^{-1/2}\widehat{Q}$, and check how far $\xi_y$ is away from 0 to assess whether $\mathbb H_0$ is correct. Formally, we have the following result.
\begin{restatable}{theorem}{thmtest}
\label{thm:test}
Denote $Q_1:= \Sigma_y^{-1 / 2} Q^\top$, $
\Omega_y:= I - Q_1\left(Q_1^\top Q_1 \right)^{-1} Q_1^\top$. We thus have $\xi_y=\widehat{\Omega}_y \widehat{Q}_1$. We select $a_1,...,a_M$ with $M > N$. Then under assump.~\ref{assum.tv}-\ref{assum:test2}, if $\mathbb{H}_0$ is correct, we have $\sqrt{n}\xi_y \overset{d}{\to} N(0,\Omega_y)$, where the rank of $\Omega_y$ is $M-N$ and $n\xi_y^\top\xi_y\to\chi_{M-N}^{2}$.
\end{restatable}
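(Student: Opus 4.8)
The plan is to reduce the claim to the standard fact that a Gaussian quadratic form $W^{\top}W$ with $W\sim N(0,\Pi)$, $\Pi$ symmetric idempotent of rank $r$, is $\chi^{2}_{r}$; the substance is to show $\sqrt{n}\,\xi_{y}\overset{d}{\to}N(0,\Omega_{y})$ with $\Omega_{y}$ such an idempotent of rank $M-N$. First I would establish the \emph{exact} population identity $q_{y}=Q^{\top}\beta_{y}$ for a suitable $\beta_{y}\in\mathbb{R}^{N}$, valid under $\mathbb{H}_{0}$ in the fine-discretization limit. Starting from $\mathbb{P}(Y_{j}\le y\mid a_{i})=\mathcal{P}(Y_{j}\le y\mid\overline{\mathbf{U}},a_{i})\,\mathcal{P}(\overline{A}_{i'}\mid\overline{\mathbf{U}},a_{i})^{-1}\mathcal{P}(\overline{A}_{i'}\mid a_{i})$, I would use Assumption~\ref{assum:test1} (invertibility of $\mathcal{P}(\overline{A}_{i'}\mid\overline{\mathbf{U}},a_{i})$) together with the TV-Lipschitz bound of Assumption~\ref{assum.tv} and $A_{i'}\perp A_{i}\mid\mathbf{U}$ from Assumption~\ref{assum:dag}, exactly as in the display preceding the theorem, so that the binning biases are $O(\varepsilon)$ once $N$ is large; under $\mathbb{H}_{0}$ the factors $\mathcal{P}(Y_{j}\le y\mid\overline{\mathbf{U}},a_{i})$ and $\mathcal{P}(\overline{A}_{i'}\mid\overline{\mathbf{U}})$ cease to depend on $a_{i}$, and stacking the $M$ chosen levels gives $q_{y}=Q^{\top}\beta_{y}$ with $\beta_{y}^{\top}:=\mathcal{P}(Y_{j}\le y\mid\overline{\mathbf{U}})\,\mathcal{P}(\overline{A}_{i'}\mid\overline{\mathbf{U}})^{-1}$. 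Equivalently $\Sigma_{y}^{-1/2}q_{y}\in\mathrm{col}(Q_{1})$, hence $\Omega_{y}\Sigma_{y}^{-1/2}q_{y}=0$.

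Next I would linearize the residual $\xi_{y}=\widehat{\Omega}_{y}\widehat{\Sigma}_{y}^{-1/2}\widehat{q}_{y}$. The residual-maker identity $\widehat{\Omega}_{y}\widehat{Q}_{1}=0$ lets me subtract off the true coefficient, $\xi_{y}=\widehat{\Omega}_{y}\widehat{\Sigma}_{y}^{-1/2}(\widehat{q}_{y}-\widehat{Q}^{\top}\beta_{y})$, and substituting $q_{y}=Q^{\top}\beta_{y}$ from the first step yields
$$\sqrt{n}\,\xi_{y}=\widehat{\Omega}_{y}\widehat{\Sigma}_{y}^{-1/2}\Big(\sqrt{n}(\widehat{q}_{y}-q_{y})-\sqrt{n}(\widehat{Q}^{\top}-Q^{\top})\beta_{y}\Big).$$
By Assumption~\ref{assum:test2}, $\widehat{\Omega}_{y}\widehat{\Sigma}_{y}^{-1/2}\overset{p}{\to}\Omega_{y}\Sigma_{y}^{-1/2}$ and $\sqrt{n}(\widehat{q}_{y}-q_{y})\overset{d}{\to}N(0,\Sigma_{y})$; the $\widehat{Q}$-term is of smaller order (here I would make explicit that $\widehat{Q}$ is estimated fast enough, e.g. at rate $o_{p}(n^{-1/2})$ or on an independent sample, so that it does not enter the limit). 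Slutsky's theorem then gives $\sqrt{n}\,\xi_{y}\overset{d}{\to}N\big(0,\Omega_{y}\Sigma_{y}^{-1/2}\Sigma_{y}\Sigma_{y}^{-1/2}\Omega_{y}^{\top}\big)=N(0,\Omega_{y})$, using $\Sigma_{y}^{-1/2}\Sigma_{y}\Sigma_{y}^{-1/2}=I$ and $\Omega_{y}=\Omega_{y}^{\top}=\Omega_{y}^{2}$.

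For the remaining claims, $\Omega_{y}=I-Q_{1}(Q_{1}^{\top}Q_{1})^{-1}Q_{1}^{\top}$ is the orthogonal projector onto $\mathrm{col}(Q_{1})^{\perp}$; since $Q_{1}=\Sigma_{y}^{-1/2}Q^{\top}$ is $M\times N$ of full column rank $N$ — guaranteed by Assumption~\ref{assum:test1}, positive-definiteness of $\Sigma_{y}$, and the design choice $M>N$ — its rank is $M-N$. Finally $n\,\xi_{y}^{\top}\xi_{y}=\|\sqrt{n}\,\xi_{y}\|^{2}\overset{d}{\to}\|W\|^{2}$ with $W\sim N(0,\Omega_{y})$ by the continuous mapping theorem; writing $W=\Omega_{y}Z$ with $Z\sim N(0,I_{M})$ gives $\|W\|^{2}=Z^{\top}\Omega_{y}Z\sim\chi^{2}_{M-N}$, the standard idempotent-quadratic-form identity.

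The main obstacle is the first step: making rigorous that the discretization biases vanish so that the \emph{exact} linear relation $q_{y}=Q^{\top}\beta_{y}$ holds under $\mathbb{H}_{0}$. This requires propagating the $O(\varepsilon)$ TV-Lipschitz error through the matrix inverse $\mathcal{P}(\overline{A}_{i'}\mid\overline{\mathbf{U}},a_{i})^{-1}$ — hence controlling the conditioning of that matrix uniformly over the chosen levels $a_{i}$ — and then choosing the partitions $\{\mathcal{U}_{n}\},\{\mathcal{A}_{n}\}$ fine enough relative to the target error. A secondary, more routine point is confirming the negligibility of the $\widehat{Q}$ estimation error noted in the second step.
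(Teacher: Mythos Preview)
Your proposal is correct and follows essentially the same route as the paper: establish the (approximate) linear relation $q_y = Q^{\top}\beta_y$ under $\mathbb{H}_0$ via the TV-Lipschitz discretization argument, then apply Slutsky together with the idempotent-quadratic-form fact to obtain the $N(0,\Omega_y)$ and $\chi^2_{M-N}$ limits. The only differences are cosmetic --- the paper writes out an explicit remainder $\Delta$ for the discretization step, invokes a delta-method lemma (Lemma~\ref{lemma_shao}) and a unitary diagonalization of $\Omega_y$ where you use continuous mapping and the representation $W=\Omega_y Z$, and glosses over the $\sqrt{n}(\widehat{Q}-Q)$ contribution that you correctly flag as needing a rate condition beyond the stated $\widehat{Q}\overset{p}{\to}Q$.
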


Given a significance level $\alpha$, one can reject $\mathbb H_0$ as long as $n\xi_y^\top\xi_y$ exceeds the $(1-\alpha)$th quantile of $\chi_{M-N}^{2}$, which guarantees a type I error no larger than $\alpha$ asymptotically. Thm.~\ref{thm:test} demonstrates that under certain conditions, the detection of conditional independence is feasible with only a single proxy variable $A_{i'}$ for $\mathbf{U}$. With Thm.~\ref{thm:test}, we can conduct hypothesis testing iteratively for each edge between $\mathbf{A}$ and $\mathbf{Y}$. In this regard, we can test assump.~\ref{assum:null-proxy} and find proxies $W_{S,j}$ and $Z_{S,j}$ that satisfy conditional independence in Eq.~\ref{eq:z},~\ref{eq:w}. Once we have identified such proxies, we can perform a causal estimation of ATE, which will be introduced in the next section.

\vspace{-0.25cm}
\section{Estimation}
\vspace{-0.15cm}
\label{sec.estimate}

With identified proxies $W_{S,j}$ and $Z_{S,j}$ at hand, we introduce our method to estimate the causal effect. Following \cite{tchetgen2020introduction,cui2023semiparametric}, we first solve the outcome bridge functions $h$ and treatment bridge functions $q$ from the following Fredholm integral equations, where we replace $W_{S,j}$ and $Z_{S,j}$ with $W$ and $Z$ for brevity:
\begin{equation}\label{eq.h_q}
     \mathbb{E}[Y-h(A,W) | A,Z]=0,~~~
    \mathbb{E}\left[q(A,Z)-1/p(A|W)| A,W\right]=0. 
\end{equation}
Intuitively, the bridge function $h$ and $q$ respectively serve a similar role as the regression function $\mathbb{E}[Y|a,W]$ and the propensity score $1/p(a|Z)$ in classical causal inference. To solve $h,q$, we use the Maximum Moment Restriction \citep{mastouri2021proximal,xu2021deep} with kernel method: 
\begin{align}
\label{eq.nuisance}
	\min_{h\in \mathcal{H} _{\mathcal{A} \mathcal{W}}} \frac{1}{n^2}\sum_{i,j=1}^n{(y_i-h_i)(y_j-h_j)k_{ij}^{g}}+\lambda _h\left\| h \right\| _{\mathcal{H} _{\mathcal{A} \mathcal{W}}}^{2}\\
	\min_{h\in \mathcal{Q} _{\mathcal{AZ} }} \frac{1}{n^2}\sum_{i,j=1}^n{(1/p_i-q_i)(1/p_j-q_j)k_{ij}^{m}}+\lambda _q\left\| q \right\| _{\mathcal{Q} _{\mathcal{AZ}}}^{2},
\end{align}
where $p_i:=p(a_i|w_i)$ denotes the propensity score, and $h$ and $q$ belong to the reproducing kernel Hilbert space (RKHS) $\mathcal{H}_{\mathcal{A} \mathcal{W}},\mathcal{H}_{\mathcal{A} \mathcal{Z}}$ with $\left\| \cdot\right\| _{\mathcal{H} _{\mathcal{A} \mathcal{W}}},\left\| \cdot\right\| _{\mathcal{H} _{\mathcal{A} \mathcal{Z}}}$ and kernels $k^g_{ij}:=k^g((a_i,z_i),(a_j,z_j))$, $k^m_{ij}=k^m( ( a_i,w_i),( a_j,w_j))$. After estimating $h$ and $q$, we employ \cite{colangelo2020double,yong2023doubly} to estimate the causal effect: 
\begin{equation}\label{eq.DR_kernel}
  \mathbb{E}[Y|\operatorname{do}(a)]\approx  \mathbb{E} _n[K_{h_{\mathrm{bw}}}\left( A-a \right) q(a,Z)(Y-h(a,W))+h(a,W))], 
\end{equation}
where the indicator function $\mathbb{I}(A=a)$ in the doubly robust estimator for binary treatments \citep{colangelo2020double} is replaced with the kernel function $K_{h_{\mathrm{bw}}}(a_i-a) = 1/h_{\mathrm{bw}} K\left((a_i-a)/{h_{\mathrm{bw}}}\right)$ ($h_{\mathrm{bw}} > 0$ is the bandwidth), as a smooth approximation to make the estimation for continuous treatments feasible. In \cite{yong2023doubly}, this estimator coupled with Eq.~\ref{eq.nuisance} was shown to enjoy the optimal convergence rate with $h_{\mathrm{bw}}=O(n^{-1/5})$. Please refer to the appendix for details.

\vspace{-0.25cm}
\section{Experiments}
\vspace{-0.15cm}
\label{sec.experiment}

In this section, we evaluate our method on synthetic data and a real-world application that studies the treatment effect for sepsis disease.

\noindent\textbf{Compared baselines.} \textbf{(i) Generalized Propensity Score (GPS)} \citep{scharfstein1999adjusting} that estimated the causal effect with generalized propensity score; \textbf{(ii) Targeted Maximum Likelihood Estimation (TMLE)} \citep{van2006targeted} that used Targeted Maximum Likelihood Estimation to estimate causal effects; \textbf{(iii) POP} \citep{zhou2020promises} that used multiple outcomes and linear structural equations to estimate causal effects; \textbf{iv) Deconf. (Linear)} \citep{wang2019blessings} that estimated unobserved confounders from multiple treatments to estimate causal effects, where the outcome model is linear; \textbf{v) Deconf. (Kernel)} that the outcome model is kernel regression; \textbf{vi) P-Deconf. (Linear)} \citep{wang2021proxy} that used null proxy on the basis of \cite{wang2019blessings} to estimate causal effects, where the outcome model is linear; \textbf{vii) P-Deconf. (Kernel)} that uses kernel regression to estimate the outcome model.

\noindent\textbf{Evaluation metrics.} We calculate the causal Mean Absolute Error (cMAE) across $10$ equally spaced points in $\mathrm{supp}(A):=[0,1]$: {\small $\mathrm{cMAE}:=\frac{1}{10}\sum_{i=1}^{10}|\mathbb{E}[Y^{a_i}]-\hat{\mathbb{E}}[Y^{a_i}]|$}. The truth $\mathbb{E}[Y^{a}]$ is derived through Monte Carlo simulations comprising 10,000 replicates of data generation for each $a$.


\begin{figure}[htbp!]
    \centering
    \includegraphics[width=0.6\textwidth]{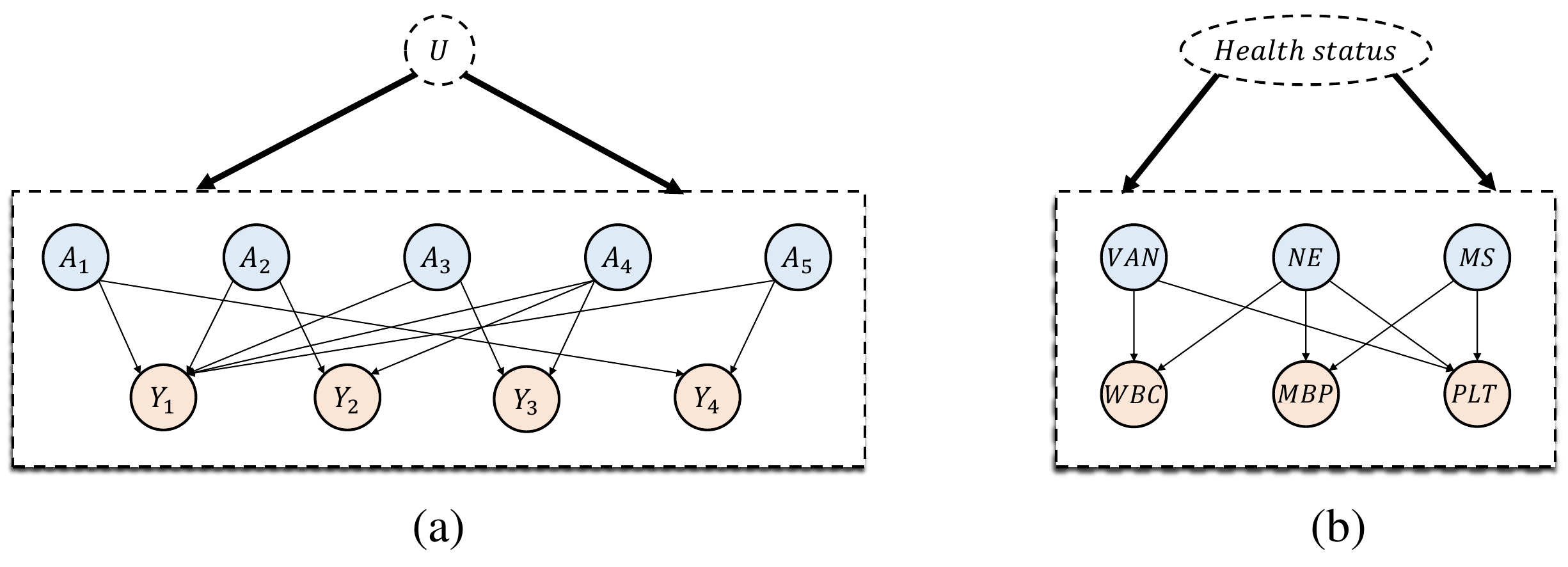}
    \caption{The causal graph over treatments (blue) and outcomes (yellow) of (a) synthetic data and (b) real-world data. The DAG of (b) is estimated using hypothesis testing in Sec.~\ref{sec.causal-discovery}}
    \label{fig: exp_DAG}
\end{figure}

\vspace{-0.25cm}
\subsection{Synthetic study}
\label{sec.synthetic_experiment}
\vspace{-0.15cm}

\noindent\textbf{Data generation.} We follow the DAG in Fig.~\ref{fig: exp_DAG} and following structural equations to generate data: $U\sim \operatorname{Uniform}[-1,1]$, $A_i=g_i(U)+\epsilon_i$, with $g_i$ chosen from $\{linear,tanh,sin,cos,sigmoid\}$ and $\epsilon_i\sim N(0,1)$, and $\mathbf{Y}$ is a non-linear structure, namely
$Y_1 = 2\sin(1.4A_1+2A_3^2)+0.5(A_2+A_4^2+A_5)+A_3^3+U+\epsilon_1, Y_2 = -2\cos(1.8A_2)+1.5A_4^2+U+\epsilon_{2}, Y_3 = 0.7A_3^2+1.2A_4+U+\epsilon_{3}$, and $Y_4 = 1.6e^{-A_1+1}+2.3A_5^2+U+\epsilon_{4}$ with $\{\epsilon_i\}_{i=1}^4 \sim N(0,1)$.

\noindent\textbf{Implementation details.} For hypothesis testing in identifying proxies, we set the significant level $\alpha$ as 0.05, the $A, W, Y$ are discretized by quantile and the bin numbers  of $A,W,Y$ as $I := |A| = 15, K := |W|= 8, L := |Y| = 5$, respectively. We estimate $q_y, Q$ in Eq.~\ref{eq.q_and_Q} using empirical probabilities. For the causal estimator in Eq.~\ref{eq.DR_kernel}, we choose the Gaussian Kernel with bandwidth $h_{\mathrm{bw}} = 1.5\hat{\sigma}_A n^{-1/5}$ where $\hat{\sigma}_A$ is the estimated standard deviation (std) of $A$. We run each algorithm 20 times to calculate the average cMAE.

\noindent\textbf{Causal effect estimation.} We estimate the average causal effect of $A_3\to Y_1$, $(A_1,A_3)\to Y_1$, $A_2\to Y_2$, $(A_1,A_5)\to Y_4$ and report the results in Tab.~\ref{tab:synthetic}. As we can see, our approach consistently provides accurate estimates of causal effects. In contrast, the GPS and TMLE suffer from large biases as the ignorability condition does not hold. Besides, Deconf. method also suffers from large biases due to the non-identifiability of latent confounders giving rise to multiple treatments. Although P-Deconf. performs well under the existence of a null proxy, it fails to estimate well when this condition is not met, as seen in the case of $A_3$ (or $A_1,A_3$) to $Y_1$ where there is no null proxy for $Y_1$. Additionally, the biases in the POP method may arise from the requirement of the condition that each treatment impacts all outcomes, which does not hold for all treatments in our scenario.

\begin{table}[htbp]
  \centering
  \caption{C-MAE of our method and other baselines on synthetic data. Each method was replicated 20 times and evaluated for $A \in [0,1]$ in each replicate.}
   \renewcommand{\arraystretch}{1.1}
  \scalebox{0.75}{
    \begin{tabular}{c|cccc|cccc}
    \hline
    \hline
    \multirow{2}{*}{Method} & \multicolumn{2}{c|}{$A_3\rightarrow Y_1$} & \multicolumn{2}{c|}{$(A_1,A_3)\rightarrow Y_1$} & \multicolumn{2}{c|}{$A_2\rightarrow Y_2$} & \multicolumn{2}{c}{$(A_1,A_5) \rightarrow Y_4$} \\
\cline{2-9}          & 600   & 1200  & 600   & 1200  & 600   & 1200  & 600   & 1200 \\
    \hline
    GPS   & $0.97_{\pm 0.33}$ & $0.64_{\pm 0.17}$ & -       &  -     & $0.46_{\pm 0.21}$ & $0.33_{\pm 0.14}$ &  -     & - \\
    TMLE  & $0.48_{\pm 0.16}$ & $0.43_{\pm 0.16}$ & -      &   -    & $0.68_{\pm 0.28}$ & $0.42_{\pm 0.15}$ &   -    &  -\\
    POP   &  $2.64_{\pm 0.49}$     &   $2.57_{\pm 0.33}$    &  -     & -       & $2.17_{\pm 1.31}$      &   $1.52_{\pm 0.65}$    & -      &-  \\
    Deconf.(Linear) & $2.52_{\pm 0.41}$ & $2.59_{\pm 0.40}$ & $6.05_{\pm 1.58}$ & $5.81_{\pm 1.49}$ & $1.23_{\pm 0.54}$ & $1.34_{\pm 0.41}$ & $1.31_{\pm 0.35}$ & $1.16_{\pm 0.30}$ \\
    Deconf.(Kernel) & $0.60_{\pm 0.08}$ & $0.48_{\pm 0.12}$ & $4.12_{\pm 0.80}$ & $3.67_{\pm 1.10}$ & $0.35_{\pm 0.20}$ & $0.38_{\pm 0.19}$ & $0.24_{\pm 0.24}$ & $0.29_{\pm 0.25}$ \\
    P-Deconf.(Linear) & $2.52_{\pm 0.41}$ & $2.59_{\pm 0.40}$ & $6.05_{\pm 1.58}$ & $5.81_{\pm 1.49}$ & $1.08_{\pm 0.26}$ & $1.19_{\pm 0.18}$ & $1.81_{\pm 0.44}$ & $1.77_{\pm 0.27}$ \\
    P-Deconf.(Kernel) & $0.60_{\pm 0.08}$ & $0.48_{\pm 0.12}$ & $4.12_{\pm 0.80}$ & $3.67_{\pm 1.10}$ & $\mathbf{0.26_{\pm 0.16}}$ & $\mathbf{0.21_{\pm 0.14}}$ & $0.19_{\pm 0.12}$ & $\mathbf{0.20_{\pm 0.10}}$ \\
    \textbf{Ours} & $\mathbf{0.28_{\pm 0.09}}$ & $\mathbf{0.27_{\pm 0.09}}$ & $\mathbf{0.49_{\pm 0.20}}$ & $\mathbf{0.48_{\pm 0.14}}$ & $0.29_{\pm 0.14}$ & $0.23_{\pm 0.09}$ & $\mathbf{0.17_{\pm 0.09}}$ & $0.22_{\pm 0.14}$ \\
    \hline
    \hline
    \end{tabular}}
  \label{tab:synthetic}%
\end{table}%

\textbf{Hypothesis Testing for Causal Discovery.} To further explain the advantage over the POP that also considers multiple outcomes, we compare the accuracy of causal relations inference. Concretely, we measure the precision, recall, and the $\mathrm{F}_1$ score of inferred causal relations with $n=600$ samples, and the result shows that our method is more accurate: $\mathrm{F}_1$ (ours: $0.78 \pm 0.06$ vs POP: $0.57\pm 0.07$), precision (ours: $0.91\pm 0.08$ vs POP: $0.45 \pm 0.03$), and recall (ours: $0.69\pm 0.06$ vs POP: $0.79 \pm 0.19$).

\vspace{-0.25cm}
\subsection{Treatment Effect for Sepsis Disease}
\label{sec.mimic}
\vspace{-0.15cm}
In this section, we apply our method to the treatment effect estimation for sepsis disease.

\noindent\textbf{Data Preparation.} We consider the Medical Information Mart for Intensive Care (MIMIC III) dataset \citep{johnson2016mimic}, which consists of electronic health records from patients in the ICU. From MIMIC III, we extract 1,165 patients with sepsis disease. For these patients, three treatment options are recorded during their stays: Vancomycin (VAN), Morphine Sulfate (MS), and Norepinephrine (NE), which are commonly used to treat sepsis patients in the ICU. After receiving treatments, we record several blood count indexes, among which we consider three outcomes in this study: White blood cell count (WBC), Mean blood pressure (MBP), and Platelets (PLT).

\noindent\textbf{Implementation details.} For hypothesis testing, we set $\alpha = 0.05$, and the bin numbers for uniform discretization of $A,W,Y$ as $I := |A| = 10, K := |W|= 6, L := |Y| = 5$, respectively. The estimation of $q_y, Q$ in  Eq.~\ref{eq.q_and_Q} and the hyperparameters in Eq.~\ref{eq.DR_kernel} is similar to the synthetic data.

\noindent\textbf{Causal discovery.} Our obtained causal diagram is depicted in Fig.~\ref{fig: exp_DAG}(b). As illustrated, Vancomycin exhibits a significant causal influence on White Blood Cell Count and Platelets, which is in accordance with existing clinical research \citep{rosini2015randomized,mohammadi2017vancomycin}. Furthermore, Fig.~\ref{fig: exp_DAG} demonstrates a close association between the prescription of morphine and Mean Blood Pressure and Platelets, but it does not appear to exert a noticeable influence on White Blood Cell Counts, aligning with the known pharmacological side effects of morphine as reported in \cite{anand2004effects,degrauwe2019influence,simons2006morphine}. Additionally, we observe that Norepinephrine causally affects all blood count parameters, as also found in previous studies \citep{gader1975effect,larsson1994norepinephrine,belin2023manually}.

\noindent\textbf{Causal Effect Estimation.} We estimate the causal effect of $\mathrm{VAN} \to \mathrm{WBC}$, $\mathrm{NE} \to \mathrm{MBP}$, $\mathrm{MS} \to \mathrm{PLT}$ and report the curve of causal effect across dosage in Fig.~\ref{fig:real}. As shown, vancomycin, used to control bacterial infections in patients with sepsis, initially lowered white blood cell counts and then leveled off because of its bactericidal and anti-inflammatory properties. Besides, Norepinephrine, as a vasopressor, can increase blood pressure. Also, Fig.~\ref{fig:real} shows that Morphine has a negative effect on platelet counts. Our findings are consistent with the literature studying the effects of three drugs on blood count indexes \citep{rosini2015randomized,degrauwe2019influence,belin2023manually}.

\begin{figure}[htbp!]
    \centering
    \includegraphics[width=0.85\textwidth]{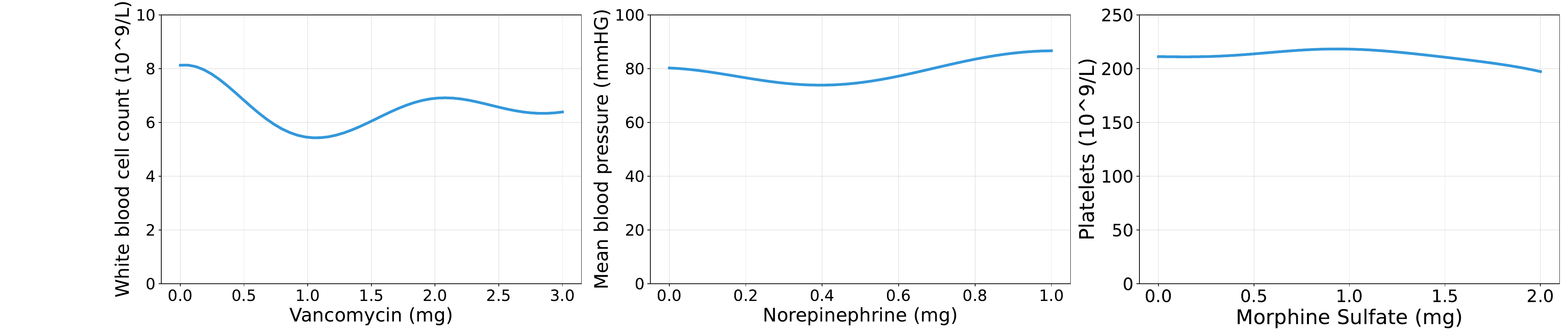}
    \caption{The curve of $\mathbb{E}[Y|\operatorname{do}(a)]$ with respect to $A$ of WBC (left), MS (middle), and PLT (right).}
    \label{fig:real}
\end{figure}

\vspace{-0.25cm}
\section{Conclusions and Discussions}
\vspace{-0.15cm}
\label{sec.conclusion} 

In this paper, we introduce a new setting called \textit{multiple treatments and multiple outcomes}, which is a natural extension of the multiple treatments setting to scenarios where multiple outcomes are of interest. Under this new scenario, we show the identifiability of the causal effect if the bipartite graph over treatments and outcomes is not dense enough, which can easily hold. We then employ hypothesis testing for causal discovery to identify proxy variables. With such identified proxies, we estimate the causal effect with a kernel-based doubly robust estimator that is provable to be consistent. We demonstrate the utility on synthetic and real-world data.

\noindent\textbf{Limitation and Future Works.} Our method requires the proxy variable for hypothesis testing to detect causal edges. Nevertheless, as demonstrated in the experiment presented in Appendix~\ref{sec.limitation}, the type-I error could still be influenced if the selected proxy variables lack sufficient strength. A possible solution to this problem is to select the most appropriate proxy variable according to the Bayes Factor that can be used to control the posterior Type-I error.

\bibliography{reference}
\bibliographystyle{iclr2024_conference}

\newpage
\appendix
\renewcommand{\contentsname}{Appendix}
\tableofcontents
\newpage
\addtocontents{toc}{\protect\setcounter{tocdepth}{2}}

\section{Preliminaries}
\subsection{Notation}
In this section, we will define some notations used throughout the proof in the appendix. Moreover,  we will introduce other notations in the corresponding subsection.

\begin{table}[!ht]
\caption{Table of Notations}
\centering
\renewcommand*{\arraystretch}{1.5}
\begin{tabular}{ >{\centering\arraybackslash}m{3cm} | >{\centering\arraybackslash}m{10.1cm} } 
\hline\hline
Notation & Meaning \\ 
\hline
 $\mathbf{X},\mathbf{U}$ & Covariates and unobserved confounders \\ 
 
 $\mathbf{A}=[A_i]_{i=1:I}$ &  $I~(I>1)$ treatments, \\
 
 $\mathbf{Y}=[Y_j]_{j=1:J}$ &  $J~(J>1)$ outcomes\\
 
 $\mathbf{O}_S=\{O_{i}|i \in S\}$ &  the subset of $\mathbf{O}$ for any variables $O$ that can denote $\mathbf{A}$, $\mathbf{X}$, and $\mathbf{Y}$\\
 
 $\mathbf{O}_{-S}=\{O_i|i\notin S\}$ &  the complementary set of $\mathbf{O}_S$\\
\hline
 $\operatorname{do}(\mathbf{a}_S)$ & $\operatorname{do}(\mathbf{A}_S=\mathbf{a}_S)$, intervention variable $\mathbf{A}_S$ with a value of $\mathbf{a}_S$\\
 
$Y(\mathbf{a})$ & the potential outcome with $\mathbf{A} = \mathbf{a}$\\

$\mathbb{E}[\cdot],\mathbb{P}[\cdot]$ & the expectation and the probability distribution of a random variable\\

$\mathcal{P}(X\mid y)$ & $\left[\mathbb P(x_1\mid y),\cdots,\mathbb P(x_k\mid y)\right]^\top$\\

$\mathcal P(x\mid Y)$ & $\left[\mathbb P(x\mid y_1),\cdots,\mathbb P(x\mid y_l)\right]$\\

$\mathcal{P}(X\mid Y)$ & $\left[\mathcal{P}(X|y_1),\cdots,\mathcal{P}(X|y_l)\right]$\\

\hline
$[I]$ & $\{1,2,3,\ldots,n\}$\\

$|\mathbf{O}_{S}|$ & the number of subsets of $\mathbf{O}$ is $S$\\

$W_{S,j},Z_{S,j}$ & two admissible proxies of $\mathbf{A}_S \to Y_j$\\

$\{\mathcal{U}_{n}\}_{n=1}^{N},\{\mathcal{A}_{n}\}_{n=1}^{N}$ & measurable partition of $\mathbb{U}$ and $A_{i'}$\\

$\overline{{\mathbf{U}}},\overline{{\mathbf{A}}}_{i'}$ &  discretized version of $\mathbb{U}$ and $A_{i'}$\\

$q_{y}$ &  $\{\mathbb{P}(Y_j\leq y|a_i)\}_{i=1}^M$\\

$Q$ &  $\left\{\mathcal{P}(\overline{A}_{i^{\prime}}|a_{i})\right\}_{i=1}^{M}$\\

$\hat{q}_{y},\hat{Q}$ &  available estimators of $q_{y},Q$ \\

$\mathbf{A} \backslash \mathbf{B}$ & Set subtraction\\
\hline
$\mathrm{supp}(A)$ & $\{a\in A:a\neq0\}$\\
$\mathrm{TV}(\cdot,\cdot)$ & Total Variation (TV) distance\\

$||f||_{\mathrm{TV}} $ & $\frac{1}{2}\|f\|_1$\\

$||x||_{1} $ & $\sqrt{\sum_i{\left| x_i \right|}}$\\

\hline \hline
\end{tabular}
\label{tab:notation}
\end{table}

\subsection{Lipschitzness}

In this section, we introduce the definition of Lipschitz Continuous. Given a metric space $(\mathcal{X},\rho)$, a function $f\colon X\to\mathbb{R}$ is L-Lipschitz with respect to the metric $\rho$ if
$$
|f(x)-f(x')|\le L\rho(x,x') \qquad \forall x,x'\in \mathcal{X}.
$$
\begin{definition}[Total variation distance]\label{def.tv}
The total variation distance between two probability
measures $\mathbb{P}$ and $\mathbb{Q}$ on a measurable space $\left(\Omega,\mathcal{X}\right)$ is defined as
$$
\|\mathbb{P}-\mathbb{Q}\|_{\mathrm{TV}}=\sup\limits_{A\subset\mathcal{X}}|\mathbb{P}(A)-\mathbb{Q}(A)|=\frac{1}{2}\|\mathbb{P}-\mathbb{Q}\|_1.
$$
\end{definition}

\section{Regularity condition}
Following \cite{miao2018identifying,cui2020semiparametric}, we use the Picard's Theorem \citep{carrasco2007linear} to characterize the existence of solutions to equations of the first kind by the singular value decomposition of the associated operators.

\begin{lemma}
    Given Hilbert spaces $H_1$ and $H_2$, a compact operator $K:H_{1}\longmapsto H_{2}$ and its adjoint operator $K':H_2\longmapsto H_1$, there exists a singular system $(\lambda_n,\varphi_n,\psi_n)_{n=1}^{+\infty}$ of $K$ with nonzero singular values $\left\{\lambda_{n}\right\}$ and orthogonal sequences $\left\{\varphi_n\in H_1\right\}$ and $\{\psi_{n}\in H_{2}\}$. Then the equation of the first kind $Kh=\phi$, where $\phi$ be a given function in $H_2$, has solution if and only if
    \begin{enumerate}[noitemsep,topsep=0.1pt,leftmargin=0.4cm]
    \item $\phi\in\mathcal{N}\left(K'\right)^{\perp}$, where $\mathcal{N}\left(K'\right)=\{h:K'h=0\}$ is the null space of the adjoint operator $K'$;
    \item $\sum_{n=1}^{+\infty}\lambda_{n}^{-2}\left|\langle\phi,\psi_{n}\rangle\right|^{2}<+\infty $.
    \end{enumerate} 
\end{lemma}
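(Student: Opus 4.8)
The plan is to derive Picard's criterion directly from the singular value decomposition (SVD) of the compact operator $K$, so the first step is to set up the SVD bookkeeping carefully. Applying the spectral theorem to the compact self-adjoint operators $K'K$ on $H_1$ and $KK'$ on $H_2$ yields the singular system $(\lambda_n,\varphi_n,\psi_n)$ with $\lambda_n>0$, $K\varphi_n=\lambda_n\psi_n$, and $K'\psi_n=\lambda_n\varphi_n$; in addition $\{\varphi_n\}$ is a complete orthonormal system of $\overline{\mathcal{R}(K')}=\mathcal{N}(K)^\perp$ and $\{\psi_n\}$ is a complete orthonormal system of $\overline{\mathcal{R}(K)}=\mathcal{N}(K')^\perp$, using the elementary range--kernel duality $\overline{\mathcal{R}(K)}=\mathcal{N}(K')^\perp$. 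I would record these facts first (invoking \cite{carrasco2007linear}) because each direction of the stated equivalence is then just a short manipulation of the orthonormal expansions they provide.

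For necessity, suppose $Kh=\phi$ for some $h\in H_1$. Then $\phi\in\mathcal{R}(K)\subseteq\overline{\mathcal{R}(K)}=\mathcal{N}(K')^\perp$, which is condition~1. For condition~2 I would compute, for each $n$, $\langle\phi,\psi_n\rangle=\langle Kh,\psi_n\rangle=\langle h,K'\psi_n\rangle=\lambda_n\langle h,\varphi_n\rangle$, hence $\langle h,\varphi_n\rangle=\lambda_n^{-1}\langle\phi,\psi_n\rangle$; Bessel's inequality for the orthonormal family $\{\varphi_n\}$ then gives $\sum_n\lambda_n^{-2}|\langle\phi,\psi_n\rangle|^2=\sum_n|\langle h,\varphi_n\rangle|^2\le\|h\|^2<\infty$.

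For sufficiency, assume conditions 1 and 2 hold. Condition~2 states exactly that the coefficient sequence $(\lambda_n^{-1}\langle\phi,\psi_n\rangle)_n$ is square-summable, so the series $h:=\sum_n\lambda_n^{-1}\langle\phi,\psi_n\rangle\,\varphi_n$ converges in $H_1$ by orthonormality of $\{\varphi_n\}$. Applying $K$ term by term (legitimate since $K$ is bounded and linear) gives $Kh=\sum_n\lambda_n^{-1}\langle\phi,\psi_n\rangle\,K\varphi_n=\sum_n\langle\phi,\psi_n\rangle\,\psi_n$. By condition~1, $\phi\in\mathcal{N}(K')^\perp$, which is the closed linear span of $\{\psi_n\}$, so its orthonormal expansion in that basis is precisely $\phi=\sum_n\langle\phi,\psi_n\rangle\,\psi_n$; therefore $Kh=\phi$, and $h$ is the desired solution. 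The only non-routine ingredient --- and the step I would treat as the main obstacle --- is establishing the singular system with the correct spanning properties, i.e.\ the spectral theorem for compact self-adjoint operators together with the identity $\overline{\mathcal{R}(K)}=\mathcal{N}(K')^\perp$; this is classical, so in practice I would simply cite it (e.g.\ \cite{carrasco2007linear}) and keep the remainder of the argument to careful bookkeeping about which orthonormal system lives in which subspace.
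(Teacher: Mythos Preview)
Your argument is the standard, correct derivation of Picard's criterion from the singular value decomposition of a compact operator, and the bookkeeping (range--kernel duality $\overline{\mathcal{R}(K)}=\mathcal{N}(K')^\perp$, the identity $\langle\phi,\psi_n\rangle=\lambda_n\langle h,\varphi_n\rangle$, Bessel's inequality for necessity, and the explicit series $h=\sum_n\lambda_n^{-1}\langle\phi,\psi_n\rangle\,\varphi_n$ for sufficiency) is all in order.

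Note, however, that the paper does not actually supply a proof of this lemma: it is stated as Picard's Theorem and attributed to \cite{carrasco2007linear}, serving only as a tool for the existence of the bridge functions in the two subsequent lemmas. So there is no ``paper's own proof'' to compare against; your write-up simply fills in what the paper leaves to the cited reference, and does so along the same classical lines one finds there.
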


In the following two lemmas, we show the existence of bridge functions under the completeness conditions. We replace $W_{S,j}$ and $Z_{S,j}$ with $W$ and $Z$ for brevity

\begin{lemma}
   Assume Assumption 2 condition 1, Assumption 5 condition 1 and the following conditions for almost all $a$: 
    \begin{itemize}[noitemsep,topsep=0.1pt,leftmargin=0.4cm]
   \item $\int\int p(w|z,a)p(z|w,a)\mathrm{d} w \mathrm{d}z<\infty$ and $ \int\mathbb{E}^{2}[Y|z,a]p(z|a)\mathrm{d} z<\infty$;
    \item $\sum_{n=1}^{\infty}\lambda_{a,n}^{-2}|\langle\mathbb{E}[Y|z,a],\phi_{a,n}\rangle|^{2}<\infty$.
    \end{itemize} 
    Then there exists function $h\in L_2(W|A=a)$ for almost all $a$ such that
    $$
    \mathbb{E}[Y|Z,A]=\int h(w,A)\mathrm{d}\mathbb P(w|Z,A),
    $$
\end{lemma}

\begin{lemma}
   Assume Assumption \ref{assum:bridge}(2) and the following conditions for almost all $a$: 
    \begin{itemize}[noitemsep,topsep=0.1pt,leftmargin=0.4cm]
   \item $\int\int p(w|z,a)p(z|w,a)\mathrm{d} w \mathrm{d}z<\infty$ and $ \int p^{-2}(a|w)p(w|a)\mathrm{d} w<\infty$;
    \item $\sum_{n=1}^{\infty}\lambda_{a,n}^{'-2}|\langle p^{-1}(a|w),\phi_{a,n}^{\prime}\rangle|^{2}<\infty$.
    \end{itemize} 
    Then there exists function $h\in L_2(Z|A=a)$ for almost all $a$ such that
    $$
    \mathbb{E}[q(a,Z)|A=a,W]=\frac{1}{p(A=a|W)}
    $$
\end{lemma}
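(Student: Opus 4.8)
The plan is to read the displayed identity as a Fredholm integral equation of the first kind and solve it with Picard's theorem, exactly as in the preceding lemma for the outcome bridge function but with the roles of $Z$ and $W$ interchanged. Fix a value $a$ — from the almost-sure set on which $p(a\mid W)>0$, guaranteed by positivity (Assumption~\ref{assum:pos}) — for which the two bulleted conditions hold. The requirement $\mathbb{E}[q(a,Z)\mid A=a,W]=1/p(A=a\mid W)$ reads, in integral form, as: find $q(a,\cdot)\in L_2(Z\mid A=a)$ with $(K_a\,q(a,\cdot))(w)=\phi_a(w)$, where $(K_a g)(w):=\int g(z)\,p(z\mid w,a)\,\mathrm{d}z=\mathbb{E}[g(Z)\mid A=a,W=w]$ and $\phi_a(w):=1/p(a\mid w)$. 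A short inner-product computation identifies the adjoint $K_a'\colon L_2(W\mid A=a)\to L_2(Z\mid A=a)$ as the operator with kernel $p(w\mid z,a)$, i.e.\ $(K_a'f)(z)=\mathbb{E}[f(W)\mid A=a,Z=z]$.

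First I would verify the structural hypotheses. The condition $\int\!\!\int p(w\mid z,a)\,p(z\mid w,a)\,\mathrm{d}w\,\mathrm{d}z<\infty$ is precisely the statement that $K_a$ is Hilbert--Schmidt, hence compact, so it admits a singular system $(\lambda_{a,n},\varphi_{a,n},\psi_{a,n})$; and $\int p^{-2}(a\mid w)\,p(w\mid a)\,\mathrm{d}w<\infty$ is precisely $\|\phi_a\|_{L_2(W\mid A=a)}^2<\infty$, so the right-hand side of the equation lies in the target space. These are exactly the two parts of the first bullet of the statement.

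Next come the two conclusions of Picard's theorem. Its summability requirement $\sum_n \lambda_{a,n}^{-2}\,|\langle\phi_a,\psi_{a,n}\rangle|^2<\infty$ is exactly the second bullet of the statement. Its orthogonality requirement $\phi_a\in\mathcal{N}(K_a')^{\perp}$ I would obtain for free by showing $\mathcal{N}(K_a')=\{0\}$: if $f$ satisfies $\mathbb{E}[f(W)\mid A=a,Z,\mathbf{x}]=0$, then the completeness for treatment bridge functions (Assumption~\ref{assum:bridge}(2)), applied to the function $f$ of $W_{S,j}$, forces $f=0$ almost surely; hence $\mathcal{N}(K_a')^{\perp}$ is all of $L_2(W\mid A=a)$ and the condition is vacuous. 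Picard's theorem then yields a solution $q(a,\cdot)\in L_2(Z\mid A=a)$, and carrying this out for almost every $a$ gives the claim.

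The content here is not a new idea but careful bookkeeping, and that is where the main obstacles lie. One must check that $K_a$ and $K_a'$ are genuinely well defined as bounded operators between the two weighted $L_2$ spaces and that they are indeed mutual adjoints — which is what the Hilbert--Schmidt condition underwrites — and one must be careful to invoke the completeness clause for functions of $W_{S,j}$, so as to get injectivity of the \emph{adjoint} $K_a'$ rather than of $K_a$ itself. Finally, since Picard's theorem is applied separately for each fixed $a$, a standard measurable-selection argument is needed to assemble the pointwise solutions into a version of $q$ that is jointly measurable in $(a,z)$; I would defer this to a remark, as in the analogous treatment of the outcome bridge function.
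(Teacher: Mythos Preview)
Your proposal is correct and follows precisely the route the paper sets up: the paper states Picard's theorem as the tool and then records this lemma without spelling out the argument, so your sketch is exactly the intended proof---define $K_a$ and its adjoint, get compactness from the Hilbert--Schmidt condition, use Assumption~\ref{assum:bridge}(2) to kill $\mathcal{N}(K_a')$, and read off the two Picard hypotheses from the two bullets.
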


    
    
\section{Identification}
This section comprises proofs for all results presented in sec.~\ref{sec.pre} and sec.~\ref{sec.iden}, along with additional extensions.

\subsection{Proof of Lemma~\ref{lemma:iden-proxy}}
\idenproxy*
\begin{proof}
We consider a limiting case in which only one of the three assumptions is satisfied.

\textbf{1.} If Assumption~\ref{assum:null-proxy} \textbf{(i)} is satisfied and Assumption~\ref{assum:null-proxy} \textbf{(ii)} and \textbf{(iii)} are not satisfied, that implies $|\mathbf{Y}| \geq 3$. Since there exists at least one missing edge from $\mathbf{A}_{S}$ to $\mathbf{Y}_{-j}$, we can choose $Z_{Sj}\in \left\{ Y\mid \mathbf{A}_{-S}\not\to  Y, Y\in \mathbf{Y}_{-j} \right\}$ and $W_{Sj}\in \mathbf{Y}_{-j}\backslash Z_{Sj} $ as two types of proxies. It follows that there must be no unblocked causal pathway between $Z_{Sj}$ and $Y_j$ conditional on $U,A_i,{\mathbf{A}_{-i}\backslash W_{Sj}}$, and ${Z_{Sj},A_i}\perp W_{Sj}\mid (U,{\mathbf{A}_{-i}\backslash W_{Sj}})$ since $Z_{Sj}$ only passes $U$ and ${\mathbf{A}_{-i}\backslash W_{Sj}}$ to $Y_j$. Thus, $Z_{ij}$ and $W_{ij}$ that satisfy the conditional independencies in Conditions~\ref{eq:z} and \ref{eq:w} are identifiable.

\textbf{2.} If Assumption~\ref{assum:null-proxy} \textbf{(ii)} is satisfied and Assumption~\ref{assum:null-proxy} \textbf{(i)} and \textbf{(iii)} are not satisfied, that implies $|\mathbf{A}_{-S}| \geq 2$. Since there exists at least one missing edge from $\mathbf{A}_{-S}$ to $Y_{j}$, we can choose $Z_{Sj}\in \left\{ Y\mid \mathbf{A}_{-S}\not\to  Y, Y\in \mathbf{Y}_{j} \right\}$ and $W_{Sj}\in \mathbf{A}_{-S}\backslash Z_{Sj}$ as two types of proxies.

\textbf{3.} If Assumption~\ref{assum:null-proxy} \textbf{(iii)} is satisfied and Assumption~\ref{assum:null-proxy} \textbf{(i)} and \textbf{(ii)} are not satisfied, that implies there is at least one missing edge from $\mathbf{A}_{-S}$ to $\mathbf{Y}_{-j}$. We can choose $\left( Z_{Sj},W_{Sj} \right) \in \left\{ \left( Y,A \right) \mid A\not\to Y,A\in \mathbf{A}_{-S},Y\in \mathbf{Y}_{-j} \right\} $ as two type of proxies.
\end{proof}

\begin{remark}
Since we are dealing with a single treatment, we can relax Assumption~\ref{assum:null-proxy}. According to Remark~\ref{remark:null}, if the entire bipartite graph between $\mathbf{A}$ and $\mathbf{Y}$ has at most $IJ-2$ edges, then the proxies are identifiable for each pair $(A_i, Y_j)$. Specifically, if $\operatorname{deg}^{+}(A_i)=J$ or $\operatorname{deg}^{-}(Y_j)=I$, where $\operatorname{deg}^{-}$ and $\operatorname{deg}^{+}$ refer to the indegree and outdegree of a vertex, then we have $W_{ij},Z_{ij}\in \{\mathbf{A}\backslash A_i\}\cup \{\mathbf{Y}\backslash Y_j\}$. On the other hand, if $\operatorname{deg}^{+}(A_i)<J$ and $\operatorname{deg}^{-}(Y_j)<I$, we can choose $W_{ij},Z_{ij}\in {\mathbf{A}\backslash A_i}$ such that $Z_{ij}$ is the vertex that is not connected to $Y_j$ and $W_{ij}$ is a vertex other than these two points.
\end{remark}

\begin{figure}[htbp!]
    \centering
    \includegraphics[width=1\textwidth]{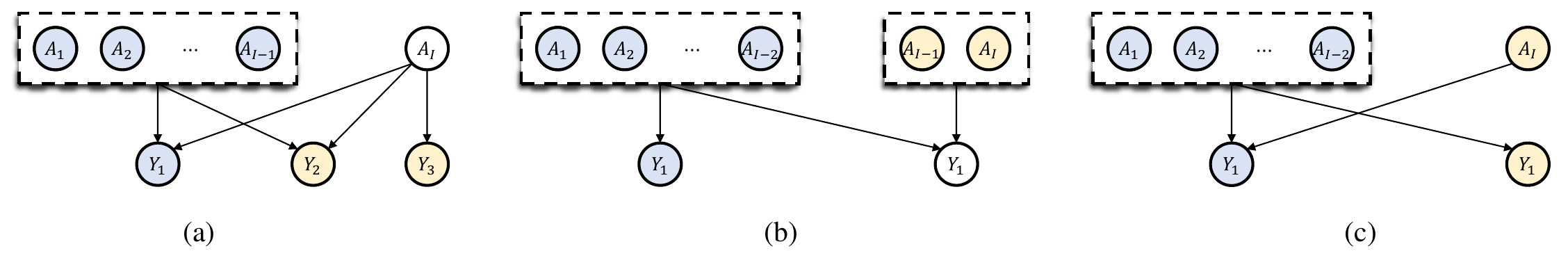}
    \caption{Example diagrams of three proof situations, where we omit confonder $\mathbf{U}$}
    \label{fig:exp_DAG}
\end{figure}

\subsection{Proof of Theorem \ref{thm:iden}}
\Identifiability*
\begin{proof}
Based on Lemma~\ref{lemma:iden-proxy}, we can identify two type of proxies $Z_{Sj}$ and $W_{Sj}$, which satisfy the conditional independencies in Eq.~\ref{eq:z} and~\ref{eq:w}, respectively. As a result, under Assumption~\ref{assum:dag}-\ref{assum:pos} and Assumption~\ref{assum:bridge}, we can identify $\mathbb{E}[Y_j\mid \operatorname{do}(\mathbf{a}_{S})]$.
\end{proof}

\subsection{Proof of Theorem \ref{thm:test}}
To proceed with our analysis, we now shift our attention to proving Theorem~\ref{thm:test}. To this end, we first present a series of lemmas that will prove useful in our subsequent proof.
\begin{lemma}\label{lemma_tv-lip}
Assume $z\mapsto\mathbb{P}((X,Y)|Z=z)$ are Lipschitz continuous with respect to Total Variation (TV) distance. Suppose that $\{\mathcal{V}_{j}\}_{j=1}^{J}$ is a measurable partition of the support of $Z$, and that $\mathrm{diam}(\mathcal{V}_j)\leq\frac{2\varepsilon}{L}$ for every bin $\mathcal{V}_{j}$ in the partition. Then, for all $z_{0}\in \mathcal{V}_{j}$, it holds that
$$
\operatorname{TV}\left(\mathbb{P}((X,Y)|Z\in \mathcal{V}_j),\mathbb{P}((X,Y)|Z=z_0)\right)\leq\varepsilon.
$$
\end{lemma}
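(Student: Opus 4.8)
The plan is to realize the conditional law $\mathbb{P}((X,Y)\mid Z\in\mathcal{V}_j)$ as a mixture — an average — of the pointwise conditionals $\mathbb{P}((X,Y)\mid Z=z')$ over $z'\in\mathcal{V}_j$, then push the TV distance through the mixture using convexity of $\operatorname{TV}$, and finally control each pointwise term by the Lipschitz hypothesis together with the diameter bound on the bin.

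Concretely, write $\mu_j$ for the law of $Z$ conditioned on the event $\{Z\in\mathcal{V}_j\}$, a probability measure supported on $\mathcal{V}_j$. The tower rule (disintegration of measures) gives, for every measurable set $B$ in the range of $(X,Y)$,
\[
\mathbb{P}\bigl((X,Y)\in B\mid Z\in\mathcal{V}_j\bigr)=\int_{\mathcal{V}_j}\mathbb{P}\bigl((X,Y)\in B\mid Z=z'\bigr)\,\mathrm{d}\mu_j(z'),
\]
i.e. $\mathbb{P}((X,Y)\mid Z\in\mathcal{V}_j)=\int_{\mathcal{V}_j}\mathbb{P}((X,Y)\mid Z=z')\,\mathrm{d}\mu_j(z')$ as an identity of probability measures. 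Since trivially $\mathbb{P}((X,Y)\mid Z=z_0)=\int_{\mathcal{V}_j}\mathbb{P}((X,Y)\mid Z=z_0)\,\mathrm{d}\mu_j(z')$, subtracting the two and using that $\operatorname{TV}(\cdot,\cdot)=\tfrac12\|\cdot-\cdot\|_1$ is a norm — so the triangle inequality applies to this integral of measures, equivalently Jensen applies to the convex functional $\nu\mapsto\operatorname{TV}(\nu,\mathbb{P}((X,Y)\mid Z=z_0))$ — yields
\[
\operatorname{TV}\bigl(\mathbb{P}((X,Y)\mid Z\in\mathcal{V}_j),\,\mathbb{P}((X,Y)\mid Z=z_0)\bigr)\le\int_{\mathcal{V}_j}\operatorname{TV}\bigl(\mathbb{P}((X,Y)\mid Z=z'),\,\mathbb{P}((X,Y)\mid Z=z_0)\bigr)\,\mathrm{d}\mu_j(z').
\]

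It then remains only to bound the integrand. By the assumed $L$-Lipschitzness of $z\mapsto\mathbb{P}((X,Y)\mid Z=z)$ with respect to TV, for every $z'\in\mathcal{V}_j$ we have $\operatorname{TV}(\mathbb{P}((X,Y)\mid Z=z'),\mathbb{P}((X,Y)\mid Z=z_0))\le L\,|z'-z_0|\le L\,\mathrm{diam}(\mathcal{V}_j)$, and the bin-size hypothesis $\mathrm{diam}(\mathcal{V}_j)\le 2\varepsilon/L$ bounds this uniformly in $z'$. As the right-hand side of the last display is an average of these quantities against the probability measure $\mu_j$, it inherits the same uniform bound, which delivers the claimed estimate.

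The only step requiring genuine care — hence the main, though routine, obstacle — is the mixture/disintegration identity: it presupposes that regular conditional distributions $z'\mapsto\mathbb{P}((X,Y)\mid Z=z')$ exist and form a version compatible with conditioning on $\{Z\in\mathcal{V}_j\}$, and that one may interchange the integral over $z'$ with the supremum defining $\operatorname{TV}$. Both hold under the implicit standing assumption that the underlying spaces are Borel/Polish; one makes the interchange rigorous either by passing to densities and invoking Fubini--Tonelli, or directly via joint convexity of the map $(\nu_1,\nu_2)\mapsto\operatorname{TV}(\nu_1,\nu_2)$. This is exactly the template used informally in the main text to justify the approximations labeled ``(1)'' and ``(3)''.
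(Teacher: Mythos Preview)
Your proposal is correct and follows essentially the same route as the paper: represent the bin-conditional law as a mixture of the pointwise conditionals against the normalized law of $Z$ on $\mathcal{V}_j$, push $\operatorname{TV}$ through the integral via convexity/the triangle inequality, and then invoke the Lipschitz hypothesis together with the diameter bound. The only wrinkle is that, taken literally, both your chain of inequalities and the paper's deliver $L\cdot\mathrm{diam}(\mathcal{V}_j)\le 2\varepsilon$ rather than $\varepsilon$; the paper's proof absorbs this by inserting an unjustified factor $\tfrac12$ in its last line, so the constant mismatch is an artifact of the lemma's stated hypothesis rather than a defect in your argument.
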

\begin{proof}
    Consider a Borel measurable function $z \mapsto \mu_z$ defined on a Polish space $S$, where $\mu_z = \mathbb{P}((X,Y)|Z=z)$. Additionally, let $\mathcal{V}_i \subseteq S$ be a Borel set within $S$. We denote $\nu = \mathbb{P}((X,Y)|Z=z_0)$ and $\lambda(Z) = \frac{\mathbb{P}(Z)}{\mathbb{P}(Z\in \mathcal{V}_j)}$ as a probability measure on $\mathcal{V}_j$. Therefore, we obtain the following:
    $$
    \begin{aligned}
    	&\mathrm{TV}\left( \mathbb{P} ((X,Y)|Z\in \mathcal{V}_j),\mathbb{P} ((X,Y)|Z=z_0) \right)\\
    	=&\mathrm{TV}\left( \frac{1}{\mathbb{P} (Z\in \mathcal{V}_j)}\int_{V_i}{\mathbb{P} ((X,Y)|Z=z)}\mathrm{d}\mathbb{P} (z),\mathbb{P} ((X,Y)|Z=z_0) \right)\\
    	=&\mathrm{TV}\left( \int_{\mathcal{V}_i}{\mu _z}\mathrm{d}\lambda (z),\nu \right) =\mathrm{TV}\left( \int_{\mathcal{V}_i}{\mu _z}\mathrm{d}\lambda (z),\int_{\mathcal{V}_i}{\nu}\mathrm{d}\lambda (z) \right)\\
    	=&\frac{1}{2}\left| \int_{\mathcal{V}_i}{\mu _z}-\nu \mathrm{d}\lambda (z) \right|\le \frac{1}{2}\int_{\mathcal{V}_i}{\left| \mu _z-\nu \right|}\mathrm{d}\lambda (z)\\
    	=&\int_{\mathcal{V}_i}{\mathrm{TV}\left( \mu _z,\nu \right)}\mathrm{d}\lambda (z)\le \frac{1}{2}L\left| z-z_0 \right|\le \varepsilon\\
    \end{aligned}
    $$
    where the last inequation is due to the Lipschitzness of $z\mapsto\mathbb{P}((X,Y)|Z=z)$.
\end{proof}

\begin{lemma}[\cite{shao2003mathematical}]\label{lemma_shao}
   Let $X_1, X_2, \cdots$ and $Y$ be random $k$-vectors satisfying $a_n(X_n - c) \rightarrow Y$ in distribution, where $c\in\mathbb{R}^{k}$ and $\left\{a_{n}\right\}$ is a sequence of positive numbers with $\lim_{n\to+\infty}a_{n}=+\infty$. Let $g$ be a function from $\mathbb{R}^{k}$ to $\mathbb{R}$. Suppose that $g$ has continuous partial derivatives of order $m > 1$ in a neighborhood of $c$, with all the partial derivatives of order smaller than $m - 1$ vanishing at $c$, but with the $m$th-order partial derivatives not all vanishing at $c$. Then 
   $$
   a_n^m\{g(X_n)-g(c)\}\to\dfrac{1}{m!}\sum_{i_1=1}^k\cdots\sum_{i_m=1}^k\dfrac{\partial^mg}{\partial x_{i_1}\cdots\partial x_{i_m}}\Bigg|_{i_m=c}Y_{i_1}\times\cdots\times Y_{i_m}~in distribution,
   $$
   where $Y_j$ is the $j$th component of $Y$.
\end{lemma}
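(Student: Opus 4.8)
The plan is to prove this by a multivariate Taylor expansion of $g$ about $c$, combined with the continuous mapping theorem and Slutsky's theorem; this is the standard higher-order delta method. First I would extract two consequences of the hypothesis $a_n(X_n-c)\overset{d}{\to}Y$ with $a_n\to+\infty$. Writing $T_n:=a_n(X_n-c)$, convergence in distribution gives $T_n=O_P(1)$, and since $X_n-c=T_n/a_n$ with $a_n\to+\infty$ we obtain the consistency $X_n\overset{p}{\to}c$. These are the only probabilistic inputs; the rest is deterministic analysis of $g$ near $c$.

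Next I would Taylor-expand. Because $g$ has continuous partial derivatives of order $m$ in a neighborhood of $c$ and all partial derivatives of orders $1,\dots,m-1$ vanish at $c$, the degree-$(m-1)$ Taylor polynomial collapses to $g(c)$, and Taylor's theorem in little-$o$ form yields, as $x\to c$,
$$
g(x)-g(c)=\frac{1}{m!}\sum_{i_1=1}^k\cdots\sum_{i_m=1}^k\frac{\partial^m g}{\partial x_{i_1}\cdots\partial x_{i_m}}\bigg|_{c}\prod_{\ell=1}^m (x_{i_\ell}-c_{i_\ell})+R(x),
$$
where $R(x)=o(\|x-c\|^m)$; here the continuity of the $m$th-order partials at $c$ is exactly what upgrades the remainder from $O(\|x-c\|^m)$ to $o(\|x-c\|^m)$. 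Multiplying by $a_n^m$ and substituting $x=X_n$, the leading term becomes a fixed polynomial $\phi$ evaluated at $T_n$, where $\phi(t):=\frac{1}{m!}\sum_{i_1,\dots,i_m}\frac{\partial^m g}{\partial x_{i_1}\cdots\partial x_{i_m}}\big|_c\prod_\ell t_{i_\ell}$, so that $a_n^m\{g(X_n)-g(c)\}=\phi(T_n)+a_n^m R(X_n)$, and $\phi(Y)$ is precisely the claimed right-hand side.

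Then I would dispatch the two pieces separately. For the leading term, $\phi$ is a polynomial and hence continuous, so the continuous mapping theorem applied to $T_n\overset{d}{\to}Y$ gives $\phi(T_n)\overset{d}{\to}\phi(Y)$. For the remainder, write $R(x)=\varepsilon(x)\|x-c\|^m$ with $\varepsilon(x)\to0$ as $x\to c$; then $a_n^m R(X_n)=\varepsilon(X_n)\,\|T_n\|^m$. By consistency $X_n\overset{p}{\to}c$ together with $\varepsilon(x)\to0$ we get $\varepsilon(X_n)\overset{p}{\to}0$, while $\|T_n\|^m=O_P(1)$ is inherited from $T_n\overset{d}{\to}Y$; hence the product is $o_P(1)$. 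Finally Slutsky's theorem combines $\phi(T_n)\overset{d}{\to}\phi(Y)$ with $a_n^m R(X_n)\overset{p}{\to}0$ to conclude $a_n^m\{g(X_n)-g(c)\}\overset{d}{\to}\phi(Y)$.

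The main obstacle is the careful treatment of the remainder, i.e.\ showing $a_n^m R(X_n)\overset{p}{\to}0$: this is where the three ingredients must interlock — the $o(\|x-c\|^m)$ Taylor bound (requiring continuity of the top-order derivatives), the boundedness $\|T_n\|^m=O_P(1)$ (from $T_n\overset{d}{\to}Y$), and the consistency $X_n\overset{p}{\to}c$ (from $a_n\to+\infty$). Everything else is a routine application of the continuous mapping theorem and Slutsky's theorem.
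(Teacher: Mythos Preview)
Your argument is correct and is the standard higher-order delta method proof: Taylor expand to order $m$, observe that the lower-order terms drop out by hypothesis, apply the continuous mapping theorem to the leading homogeneous polynomial of $T_n=a_n(X_n-c)$, and kill the remainder via $\varepsilon(X_n)\|T_n\|^m=o_P(1)\cdot O_P(1)$ followed by Slutsky. There is nothing to compare against in the paper: the lemma is stated without proof and attributed to \cite{shao2003mathematical}, where exactly this Taylor-plus-Slutsky argument appears.
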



\thmtest*
\begin{proof}
As we mentioned before, we denote $N$ bins $\{\mathcal{U}_n\}_{n=1}^N$ and $\{\mathcal{A}_n\}_{n=1}^N$ as measurable partition of $\mathbf{U}$ and $A_{i'}$ in the hypothesis testing of $\mathbb{H}_0: A_i \perp Y_j | \mathbf{U}$. Denote $\overline{\mathbf{U}}$ and $\overline{A}_{i'}$ as discretized version of $\mathbf{U}$ and $A_{i'}$ such that $\overline{\mathbf{U}} = k$ iff $\mathbf{U} \in \mathcal{U}_k$. By the definition of conditional probability and $A_i\perp A_{i'}\mid \mathbf{U}$, we have
$$
\begin{aligned}
	\mathbb{P} \left( a_{i'}\in \mathcal{A} _n|a_i \right) &=\sum_{k=1}^N{\mathbb{P} \left( a_{i'}\in \mathcal{A} _n|\mathbf{U}\in \mathcal{U} _k,a_i \right) \mathbb{P} \left( \mathbf{U}\in \mathcal{U} _k|a_i \right)}:=\mathcal{P} \left(  a_{i'}\in \mathcal{A} _n|\overline{\mathbf{U}},a_i \right) \mathcal{P} \left( \overline{\mathbf{U}}|a_i \right) ,\\
	\mathbb{P} \left( Y_j\le y|a_i \right) &=\sum_{k=1}^N{\mathbb{P} \left( Y_j\le y|\mathbf{U}\in \mathcal{U} _k,a_i \right) \mathbb{P} \left( \mathbf{U}\in \mathcal{U} _k|a_i \right)}:=\mathcal{P} \left( Y_j\le y|\overline{\mathbf{U}},a_i \right) \mathcal{P} \left( \overline{\mathbf{U}}|a_i \right) .\\
\end{aligned}
$$

For the first equation, we line up all the $\mathcal{A}_n$ in a row,
$$
\mathcal{P} \left( \overline{A}_{i'}|a_i \right) =\mathcal{P} \left( \overline{A}_{i'}|\overline{\mathbf{U}},a_i \right) \mathcal{P} \left( \overline{\mathbf{U}}|a_i \right) .
$$
By Assumption~\ref{assum:test1}, the matrix $\mathcal{P}(\overline{A}_{i'}| \overline{\mathbf{U}},a_i)$ is invertible, then we obtain
$$
\mathbb{P} \left( Y_j\le y_j|a_i \right) =\mathcal{P} \left( Y_j\le y_j|\overline{\mathbf{U}},a_i \right) \mathcal{P} \left( \overline{A}_{i'}|\overline{\mathbf{U}},a_i \right) ^{-1}\mathcal{P} \left( \overline{A}_{i'}|a_i \right) 
$$

According to Lemma~\ref{lemma_tv-lip}, as long as the partition is fine enough, namely $ \mathrm{diam}(\mathcal{U}_k)\le \min\{\varepsilon/L_{A_{i'}} , \varepsilon/L_{A_{i'}\mid A_i}\} $, we have
$$
\begin{aligned}
	\mathrm{TV}\left(\mathbb{P} \left( Y_j\le y_j|\mathbf{U}\in \mathcal{U} _k \right),\mathbb{P} \left( Y_j\le y_j|\mathbf{U}=u \right)\right) &\le \frac{\varepsilon }{2} ,\\
	\mathrm{TV}\left( \mathbb{P} \left( Y_j\le y_j|\mathbf{U}\in \mathcal{U} _k,a_i \right) ,\mathbb{P} \left( Y_j\le y_j|\mathbf{U}=u,a_i \right) \right)  &\le \frac{\varepsilon }{2}.\\
\end{aligned}
$$

If $\mathbb H_0$ holds, we have
$$
\begin{aligned}
	&\left| \mathbb{P} \left( Y_j\le y_j|\mathbf{U}\in \mathcal{U} _k \right) -\mathbb{P} \left( Y_j\le y_j|\mathbf{U}\in \mathcal{U} _k,a_i \right) \right|\\
	\le &\left| \mathbb{P} \left( Y_j\le y_j|\mathbf{U}\in \mathcal{U} _k \right) -\mathbb{P} \left( Y_j\le y_j|a_i,\mathbf{U}=u \right) \right|\\
	&\qquad \qquad \qquad \qquad +\left| \mathbb{P} \left( Y_j\le y_j|a_i,\mathbf{U}=u \right) -\mathbb{P} \left( Y_j\le y_j|\mathbf{U}\in \mathcal{U} _k,a_i \right) \right|\\
	=&\left| \mathbb{P} \left( Y_j\le y_j|\mathbf{U}\in \mathcal{U} _k \right) -\mathbb{P} \left( Y_j\le y_j|\mathbf{U}=u \right) \right|\\
	&\qquad \qquad \qquad \qquad +\left| \mathbb{P} \left( Y_j\le y_j|\mathbf{U}=u,a_i \right) -\mathbb{P} \left( Y_j\le y_j|\mathbf{U}\in \mathcal{U} _k,a_i \right) \right|\\
	\overset{(1)}{\leq}  &2\mathrm{TV}\left( \mathbb{P} \left( Y_j\le y_j|\mathbf{U}\in \mathcal{U} _k \right) ,\mathbb{P} \left( Y_j\le y_j|\mathbf{U}=u \right) \right)\\
	&\qquad \qquad \qquad \qquad +2\mathrm{TV}\left( \mathbb{P} \left( Y_j\le y_j|\mathbf{U}\in \mathcal{U} _k,a_i \right) ,\mathbb{P} \left( Y_j\le y_j|\mathbf{U}=u,a_i \right) \right)\\
	\le& \varepsilon\\
\end{aligned}
$$
where (1) is derived from Def.~\ref{def.tv}. Similar, according to Lemma~\ref{lemma_tv-lip}, we have
$$
\mathrm{TV}\left(\mathbb{P}(a_{i'}\in\mathcal{A}_n|\mathbf{U}\in\mathcal{U}_k,a_i),\mathbb{P}(a_{i'}\in\mathcal{A}_n|\mathbf{U}=u,a_i)\right)\le \varepsilon. 
$$

Therefore
$$
\begin{aligned}
	\mathbb{P} \left( Y_j\le y_j|a_i \right) &=\mathcal{P} \left( Y_j\le y_j|\overline{\mathbf{U}},a_i \right) \mathcal{P} \left( \overline{A}_{i'}|\overline{\mathbf{U}},a_i \right) ^{-1}\mathcal{P} \left( \overline{A}_{i'}|a_i \right)\\
	&=\left[ \mathcal{P} \left( Y_j\le y_j|\overline{\mathbf{U}},a_i \right) -\mathcal{P} \left( Y_j\le y_j|\overline{\mathbf{U}} \right) +\mathcal{P} \left( Y_j\le y_j|\overline{\mathbf{U}} \right) \right]\\
	&\qquad \qquad \qquad \cdot \left[ \mathcal{P} \left( \overline{A}_{i'}|\overline{\mathbf{U}},a_i \right) ^{-1}-\mathcal{P} \left( \overline{A}_{i'}|\overline{\mathbf{U}} \right) ^{-1}+\mathcal{P} \left( \overline{A}_{i'}|\overline{\mathbf{U}} \right) ^{-1} \right] \mathcal{P} \left( \overline{A}_{i'}|a_i \right)\\
	&=\mathcal{P} \left( Y_j\le y_j|\overline{\mathbf{U}} \right) \mathcal{P} \left( \overline{A}_{i'}|\overline{\mathbf{U}} \right) ^{-1}\mathcal{P} \left( \overline{A}_{i'}|a_i \right) +\Delta \mathcal{P} \left( \overline{A}_{i'}|a_i \right)\\
\end{aligned}
$$
where
$$
\begin{aligned}
	\Delta &=\mathcal{P} \left( Y_j\le y_j|\overline{\mathbf{U}} \right) \left[ \mathcal{P} \left( \overline{A}_{i'}|\overline{\mathbf{U}},a_i \right) ^{-1}-\mathcal{P} \left( \overline{A}_{i'}|\overline{\mathbf{U}} \right) ^{-1} \right]\\
	&\qquad\qquad\qquad\qquad\quad+\left[ \mathcal{P} \left( Y_j\le y_j|\overline{\mathbf{U}},a_i \right) -\mathcal{P} \left( Y_j\le y_j|\overline{\mathbf{U}} \right) \right] \mathcal{P} \left( \overline{A}_{i'}|\overline{\mathbf{U}} \right) ^{-1}\\
	&+\left[ \mathcal{P} \left( Y_j\le y_j|\overline{\mathbf{U}},a_i \right) -\mathcal{P} \left( Y_j\le y_j|\overline{\mathbf{U}} \right) \right] \left[ \mathcal{P} \left( \overline{A}_{i'}|\overline{\mathbf{U}},a_i \right) ^{-1}-\mathcal{P} \left( \overline{A}_{i'}|\overline{\mathbf{U}} \right) ^{-1} \right] .\\
\end{aligned}
$$

By $A_i\perp A_{i'}\mid \mathbf{U}$, we have
$$
\lim_{\epsilon \rightarrow 0} \mathcal{P} \left( \overline{A}_{i'}|\overline{\mathbf{U}},a_i \right) ^{-1}-\mathcal{P} \left( \overline{A}_{i'}|\overline{\mathbf{U}} \right) ^{-1}=[0].
$$
where $[0]$ denote the zero matrix.

By $\mathbb H_0$, we have
$$
\lim_{\epsilon \rightarrow 0} \mathcal{P} \left( Y_j\le y_j|\overline{\mathbf{U}},a_i \right) -\mathcal{P} \left( Y_j\le y_j|\overline{\mathbf{U}} \right) =[0].
$$

With the above two equations, we have
$$
\begin{aligned}
	\lim_{\epsilon \rightarrow 0} \Delta &=[0],\\
	\lim_{\epsilon \rightarrow 0} \mathbb{P} (Y_j\le y_j|a_i)&=\mathcal{P} \left( Y_j\le y_j|\overline{\mathbf{U}} \right) \mathcal{P} \left( \overline{A}_{i'}|\overline{\mathbf{U}} \right) ^{-1}\mathcal{P} \left( \overline{A}_{i'}|a_i \right).\\
\end{aligned}
$$
Considering all bins of $\operatorname{supp}(A_i)$, this can be written in the form of transition probability matrix:
$$
\lim_{\epsilon \rightarrow 0} q_{y}^{T}=\mathcal{P} \left( Y_j\le y_j|\overline{\mathbf{U}} \right) \mathcal{P} \left( \overline{A}_{i'}|\overline{\mathbf{U}} \right) ^{-1}Q,
$$
which means $q_{y}^{T}\sim Q$ is linear under $\mathbb H_0$.

According to Assumption~\ref{assum:test2}, since $\sqrt{n}(\widehat{q}_y-q_y)\overset{d}{\rightarrow}N(0,\Sigma _y),\widehat{Q}\overset{p}{\rightarrow}Q,\widehat{\Sigma }_y\overset{p}{\rightarrow}\Sigma _y$, applying Slutsky’s theorem, we have $n^{\frac{1}{2}}(\xi_{y}-\Omega_{y}\Sigma_{y}^{-\frac{1}{2}}q_{y})\stackrel{D}{\rightarrow}N(0,\Omega_{y}\Omega_{y}^{T})$, where $\Omega_y=I-Q_1\left(Q_1^{\top}Q_1\right)^{-1}Q_1^{\top}$. Because $\Omega_{y}$ is a symmetric, idempotent matrix, we have $
n^{1/2}(\xi _y -\Omega _y \Sigma _{y}^{-1/2}q_y )\overset{d}{\rightarrow}N(0,\Omega _y )$. When there are enough bins, $\Omega_y\Sigma_y^{-1/2}q_y\rightarrow0$, which implies that $n^{1/2}\xi _y \overset{d}{\rightarrow}N(0,\Omega _y )$. 

Because $ Q_1$ has rank $N$, $Q_1\left(Q_1^\top Q_1 \right)^{-1} Q_1^\top$ is an idempotent matrix of rank $N$. Hence, $\Omega _y$ is an idempotent matrix of rank $M-N$. For fixed $y$, applying Lemma~\ref{lemma_shao} with $g(x)=x^\top x$, we have $
T_y =g(n^{1/2}\xi _y )\overset{d}{\rightarrow}N(0,\Omega _y )^{\mathrm{T}}N(0,\Omega _y )$. 

Because $\Omega _y$ is an idempotent matrix of rank $M-N$, there exists a unitary matrix $V$ such that $V\Omega_yV^\top=\text{diag}(1,\ldots,1,0,\ldots,0)$, a diagonal matrix with $M-N$ eigenvalues equal to one. Thus, $ VN(0,\Omega_y)\sim N\{0,\text{diag}(1,\ldots,1,0,\ldots,0)\}$, and $N(0,\Omega_y)^\top N(0,\Omega_y)=\{VN(0,\Omega_y)\}^\top\{VN(0,\Omega_y)\}\sim\chi_{M-N}^2$. Therefore, $n\xi^\top_y\xi_y\overset{d}{\rightarrow}\chi _{M-N}^{2}$.
\end{proof}

Given a significance level of $\alpha$, we can reject the null hypothesis $\mathbb{H}_0$ if $n\xi^\top_y\xi_y$ exceeds the $(1-\alpha)$th quantile of the $\chi_{r}^{2}$ distribution. This ensures that the type I error is no larger than $\alpha$ asymptotically.

\subsection{Additional information}
\label{sec.addition}
To satisfy the conditions of Theorem~\ref{thm:test}, we need to estimate $\Sigma_y$ and $Q$. The probability matrix $Q$ can be estimated using empirical probabilities. 
When considering the covariance matrix, it is important to recognize that each element of the vector $q_y$ corresponds to a distribution function. To estimate it, we can utilize the empirical probability distribution function. The central limit theorem ensures that the estimate converges to a normal distribution as the sample size increases. Alternatively, we have the option to discretize the variable $Y_j$ and divide it into a finite number of bins. Specifically, suppose the bins $\{E_l\}_{l=1}^L$ is a measurable partition of the support of $Y_j$. Then let $q^{\top}=(q_1^\top,\ldots,q_{L-1}^\top)$; then under $\mathbb H_0$, we have
$$
q\approx \{\mathcal{P} \left( y_j\in E_1\mid \overline{\mathbf{U}} \right) \mathcal{P} \left( \overline{A}_{i'}|\overline{\mathbf{U}} \right) ^{-1},...,\mathcal{P} \left( y_j\in E_{L-1}\mid \overline{\mathbf{U}} \right) \mathcal{P} \left( \overline{A}_{i'}|\overline{\mathbf{U}} \right) ^{-1}\}\left( \begin{matrix}
	Q&		0&		0\\
	\vdots&		\vdots&		\vdots\\
	0&		0&		Q\\
\end{matrix} \right) .
$$

Denote the diagonal matrix on the right hand side as $Q_0$. We can construct a new test statistic T that aggregates all levels of $Y_j$ by replacing $(\hat{q}_y,\hat{Q})$ with $(\hat{q},\hat{Q})$ wherever they appear in the construction of $\xi_y$ and $T_y$.
\begin{corollary}
Suppose we have available estimators $(\widehat{q},\widehat{Q}_0)$ that satisfy
    $$
     \sqrt{n}(\widehat{q}-q)  \overset{d}{\to} N(0,\Sigma)\quad
    \widehat{Q}_0 \overset{p}{\to} Q_0, \widehat{\Sigma} \overset{p}{\to} \Sigma
    $$
where $\widehat{\Sigma}$ and $\Sigma$ are positive-definite. Denote $Q_2:= \Sigma^{-1 / 2} Q_0^\top$, $
\Omega:= I - Q_2\left(Q_2^\top Q_2 \right)^{-1} Q_2^\top$
and $\xi=\widehat{\Omega} \widehat{Q}_2$. We select $a_1,\cdots, a_M$ with $M>N$. Then under Assumption~\ref{assum.tv},~\ref{assum:test1} and \ref{assum:test2}, if 
 $\mathbb{H}_0$ is correct, we have $\sqrt{n}\xi \overset{d}{\to} N(0,\Omega)$, where the rank of $\Omega$ is $(M-N)\times(L-1)$ and $n\xi^\top\xi\to\chi_{(M-N)(L-1)}^{2}$.
\end{corollary}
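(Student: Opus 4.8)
The plan is to run the argument of Theorem~\ref{thm:test} essentially verbatim, with the single-level objects $(q_y, Q, \Sigma_y, \Omega_y)$ replaced throughout by their stacked and block-diagonal counterparts $(q, Q_0, \Sigma, \Omega)$, and then to redo the rank count. The only genuinely new point is the observation that the ``linearity under $\mathbb{H}_0$'' identity from the proof of Theorem~\ref{thm:test} holds simultaneously at every threshold of $Y_j$ with the \emph{same} matrix $Q$, so that stacking the $L-1$ levels turns it into linearity against the block-diagonal $Q_0 = \operatorname{diag}(Q,\dots,Q)$.

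First I would reuse the opening part of the proof of Theorem~\ref{thm:test}: for each bin $E_l$ of $Y_j$ and each $a_i$, Lemma~\ref{lemma_tv-lip} together with Assumptions~\ref{assum.tv} and~\ref{assum:test1} and a sufficiently fine partition $\{\mathcal{U}_k\}$ give, under $\mathbb{H}_0$,
\[
\mathbb{P}(y_j \in E_l \mid a_i) = \mathcal{P}(y_j \in E_l \mid \overline{\mathbf{U}})\,\mathcal{P}(\overline{A}_{i'}\mid\overline{\mathbf{U}})^{-1}\mathcal{P}(\overline{A}_{i'}\mid a_i) + \Delta_{l,i},
\]
with $\Delta_{l,i} \to [0]$ as the bin diameters shrink. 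Collecting over $i = 1,\dots,M$ and stacking $l = 1,\dots,L-1$ yields $q^\top = \beta\,Q_0 + o(1)$, where $\beta$ concatenates the coefficient vectors $\mathcal{P}(y_j\in E_l\mid\overline{\mathbf{U}})\mathcal{P}(\overline{A}_{i'}\mid\overline{\mathbf{U}})^{-1}$. Equivalently, $\Sigma^{-1/2}q$ lies asymptotically in the column space of $Q_2 = \Sigma^{-1/2}Q_0^\top$, so the complementary projector $\Omega = I - Q_2(Q_2^\top Q_2)^{-1}Q_2^\top$ satisfies $\Omega\,\Sigma^{-1/2}q \to 0$.

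Next, letting $\xi = \widehat{\Omega}\,\widehat{\Sigma}^{-1/2}\widehat{q}$ be the residual of regressing $\widehat{\Sigma}^{-1/2}\widehat{q}$ on the columns of $\widehat{\Sigma}^{-1/2}\widehat{Q}_0^\top$, I would apply Slutsky's theorem to the assumed $\sqrt{n}(\widehat{q}-q)\overset{d}{\to} N(0,\Sigma)$, $\widehat{Q}_0\overset{p}{\to}Q_0$, $\widehat{\Sigma}\overset{p}{\to}\Sigma$ to obtain $\sqrt{n}(\xi - \Omega\Sigma^{-1/2}q)\overset{d}{\to} N(0,\Omega\Omega^\top)$; since $\Omega$ is symmetric idempotent this equals $N(0,\Omega)$, and the previous paragraph makes the drift term vanish, so $\sqrt{n}\,\xi\overset{d}{\to} N(0,\Omega)$. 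For the degrees of freedom, Assumption~\ref{assum:test1} forces $\operatorname{rank}(Q)=N$, whence $\operatorname{rank}(Q_0)=N(L-1)$ and $\operatorname{rank}(Q_2)=N(L-1)$, so $Q_2(Q_2^\top Q_2)^{-1}Q_2^\top$ is an idempotent projector of rank $N(L-1)$ on $\mathbb{R}^{M(L-1)}$ and $\Omega$ is idempotent of rank $(M-N)(L-1)$. Finally, Lemma~\ref{lemma_shao} with $g(x)=x^\top x$ gives $n\,\xi^\top\xi\overset{d}{\to} N(0,\Omega)^\top N(0,\Omega)$, and diagonalizing $\Omega$ by a unitary matrix to $\operatorname{diag}(1,\dots,1,0,\dots,0)$ with $(M-N)(L-1)$ ones identifies this limit as $\chi^2_{(M-N)(L-1)}$.

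The hard part will be establishing the block-diagonal linearity $q^\top = \beta\,Q_0 + o(1)$ under $\mathbb{H}_0$: one must check that the TV-Lipschitz approximations hold uniformly across all $L-1$ thresholds of $Y_j$, so that one bin width makes every $\Delta_{l,i}$ negligible, and that the aggregate error stays $o(1)$ after multiplication by $\mathcal{P}(\overline{A}_{i'}\mid a_i)$ and by $\widehat{\Sigma}^{-1/2}$. Since $L$ is fixed and the bound in Lemma~\ref{lemma_tv-lip} is uniform over the events considered, this is routine; everything else is the delta-method and projection-matrix bookkeeping of Theorem~\ref{thm:test} propagated through the block-diagonal structure of $Q_0$.
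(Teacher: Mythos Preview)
Your proposal is correct and matches the paper's approach: the paper does not give a separate proof of the corollary but presents it as an immediate aggregation of Theorem~\ref{thm:test}, so rerunning that argument with the stacked vector $q$ and the block-diagonal $Q_0=\operatorname{diag}(Q,\dots,Q)$, then recomputing the rank as $(M-N)(L-1)$, is exactly what is intended. Your observation that the same matrix $Q$ appears at every level of $Y_j$---and hence that linearity stacks into $q^\top\approx\beta Q_0$---is precisely the one ``new'' ingredient, and the rest is the Slutsky/idempotent-projection bookkeeping you outline.
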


Aggregating all levels of an $M$-category outcome results in a chi-square test with $(M-N)(L-1)$ degrees of freedom. In the case where $q$ degenerates into distributions of discrete variables, we can estimate it using empirical probabilities, and then estimate the covariance matrix accordingly. Besides, we can also use $q=\{\mathbb E(Y_j\mid a_i),\ldots,\mathbb E(Y_j\mid a_i)\}^{\top}$ in construction of the test statistic and perform the test on the mean scale.

When there are also observed confounders $\mathbf{X}$, the above proof also holds. We just need to notice  
$$
\begin{aligned}
	\mathbb{P} \left( a_{i'}\in \mathcal{A} _n|a_i,\mathbf{x} \right) &=\mathcal{P} \left( a_{i'}\in \mathcal{A} _n|\overline{\mathbf{U}},a_i,\mathbf{x} \right) \mathcal{P} \left( \overline{\mathbf{U}}|a_i,\mathbf{x} \right) ,\\
	\mathbb{P} \left( Y_j\le y|a_i,\mathbf{x} \right) &=\mathcal{P} \left( Y_j\le y|\overline{\mathbf{U}},a_i,\mathbf{x} \right) \mathcal{P} \left( \overline{\mathbf{U}}|a_i,\mathbf{x} \right) .\\
\end{aligned}
$$

The rest of the proofs are similar.

In sec.~\ref{sec.iden}, the hypothesis testing is presented under the assumption that the random variables $\mathbf{U}$ and $A_{i'}$ are bounded. However, it is important to note that our hypothesis test can also be applied when the random variables are unbounded. This is due to the fact that when the obtained bound is sufficiently large, the probability of falling in the tail of the distribution becomes arbitrarily small. Specifically, we divide the domain of $\mathbf U$ into $\Omega_{\mathbf U}:=\{|\mathbf U|\leq T\}\cup\{|\mathbf U|>T\}$, and further partition $\{|U|\leq T\}:=\cup_{n}^{N}{\mathcal{U}}_{n}$.
$$
\begin{aligned}
	\mathbb{P} \left( Y_j\le y_j|a_i \right) &=\sum_{k=1}^N{\mathbb{P} \left( Y_j\le y_j,\mathbf{U}\in \mathcal{U} _k|a_i \right)}+\mathbb{P} \left( Y_j\le y_j,\left| \mathbf{U} \right|>T|a_i \right)\\
	&=\sum_{k=1}^N{\mathbb{P} \left( Y_j\le y_j|\mathbf{U}\in \mathcal{U} _k,a_i \right) \mathbb{P} \left( \mathbf{U}\in \mathcal{U} _k|a_i \right)}\\
	&\qquad \qquad +\mathbb{P} \left( Y_j\le y_j|\left| \mathbf{U} \right|>T,a_i \right) \mathbb{P} \left( \left| \mathbf{U} \right|>T|a_i \right)\\
	&=\sum_{k=1}^N{\mathbb{P} \left( Y_j\le y_j|\mathbf{U}\in \mathcal{U} _k,a_i \right) \mathbb{P} \left( \mathbf{U}\in \mathcal{U} _k|a_i \right)}+\varepsilon _T\mathbb{P} \left( Y_j\le y_j|\left| \mathbf{U} \right|>T,a_i \right)\\
	\mathbb{P} (a_{i'}\in \mathcal{A} _n|a_i)&=\sum_{k=1}^N{\mathbb{P} \left( a_{i'}\in \mathcal{A} _n,\mathbf{U}\in \mathcal{U} _k|a_i \right)}+\mathbb{P} \left( a_{i'}\in \mathcal{A} _n,\left| \mathbf{u} \right|>T|a_i \right)\\
	&=\sum_{k=1}^N{\mathbb{P} \left( a_{i'}\in \mathcal{A} _n|\mathbf{U}\in \mathcal{U} _k,a_i \right) \mathbb{P} \left( \mathbf{U}\in \mathcal{U} _k|a_i \right)}\\
	&\qquad \qquad +\mathbb{P} \left( a_{i'}\in \mathcal{A} _n|\left| \mathbf{u} \right|>T,a_i \right) \mathbb{P} \left( \left| \mathbf{u} \right|>T|a_i \right)\\
	&=\sum_{k=1}^N{\mathbb{P} \left( a_{i'}\in \mathcal{A} _n|\mathbf{U}\in \mathcal{U} _k,a_i \right) \mathbb{P} \left( \mathbf{U}\in \mathcal{U} _k|a_i \right)}+\varepsilon _T\mathbb{P} \left( a_{i'}\in \mathcal{A} _n|\left| \mathbf{u} \right|>T,a_i \right)\\
\end{aligned}
$$

Upon conducting our analysis, we have discovered that when $T$ is large enough, the probability of falling the tail region becomes arbitrarily small. Moreover, this item has nothing to do with $a_i$, so it will not affect our analysis of whether it is linear.

\subsection{Null-TV Lipschitzness}
To start, we'll provide some specific examples of distributions that fall into different Lipschitzness classes. 

\begin{example}
Suppose $(X,Z)\sim N(\mu_1,\mu_2,\sigma^2_1,\sigma^2_2,\rho)$, where $\rho$ is the correlation between $X$ and $Z$, then $z\mapsto\mathbb{P}(X|Z=z)$ be NULL-TV Lipschitzness.
\end{example}
\begin{proof}
Easy to obtain the conditional distribution of $X$ given $z$ is given by the normal distribution $X\mid z\sim N(\mu_3,\sigma^2_3)$, where $\mu _3 = \mu _1+\rho \frac{\sigma _1}{\sigma _2}\left( z-\mu _2 \right) $ and $\sigma _{3}^{2}=\sigma _{1}^{2}\left( 1-\rho ^2 \right)$. We would like to prove that the function $z\mapsto\mathbb{P}(X|Z=z)$ is NULL-TV Lipschitz, which means that it has a Lipschitz constant. To do so, we calculate the total variation distance between $\mathbb{P}(X|Z=z)$ and $\mathbb{P}(X|Z=z')$.

$$
\begin{aligned}
	\operatorname{TV}\left( \mathbb{P} (X|Z=z),\mathbb{P} (X|Z=z') \right) &=\frac{1}{2}\int{p\left( x\mid z \right) -p\left( x\mid z' \right) \mathrm{d}x}\\
	&=\frac{1}{2}\int{\frac{\partial p(x|\xi )}{\partial z}\left( z-z' \right) \mathrm{d}x}\\
	&=\frac{1}{2}\left| z-z' \right|\int{\frac{\partial p(x|\xi )}{\partial z}\mathrm{d}x}\\
\end{aligned}
$$
Thus we find that the total variation distance is bounded by $L\left| z-z' \right|$, where $L$ is a Lipschitz constant that depends on the partial derivative of the probability density function with respect to $z$. Observe that $\frac{\partial p(x|\xi )}{\partial z}$ is a function of the form $a(x+b)\exp(-c(x+d)^2)$, which is absolute integrable. we can conclude that $L$ is finite, which implies that $z\mapsto\mathbb{P}(X|Z=z)$ is indeed NULL-TV Lipschitz.
\end{proof}

In fact, \cite{neykov2021minimax} demonstrated that if the log-conditional density is sufficiently smooth, the resulting distribution belongs to the TV Lipschitzness class. A wide range of exponential family distributions possess a smooth log-conditional distribution, which satisfies Assumption~\ref{assum.tv}.
\begin{example}
    Suppose that $g(x,y,z):[0,1]^3\mapsto[-M,M]$ is a bounded L-Lipschitz function, i.e., $|g(x,y,z)-g(x',y',z')|\leq L(|x-x'|+|y-y'|+|z-z'|)$. Take $\mathbb P(X,Y,Z)\propto\exp(g(x,y,z))$. Then
    $$
    p_{X,Y|Z}(x,y|z)=\frac{\exp(g(x,y,z))}{\int_{[0,1]^2}\exp(g(x,y,z))\mathrm{d}x\mathrm{d}y},
    $$
    be NULL-TV Lipschitz which Lipschitz constant is $e^{2L}-1$.
\end{example}

Besides, in Bayesian networks, it is often useful to determine the total variation distance between the joint distributions of several random variables. If each of these distributions is log-Lipschitz continuous, then their product is guaranteed to be TV-Lipschitz continuous. This result is particularly useful in Bayesian networks, where it can be used to simplify the analysis of the propagation of probabilities between nodes. Refer to Literature \cite{honorio2012lipschitz} for details.

Recently, \cite{dolera2020lipschitz} provide general conditions for getting a global form of Lipschitz continuity for dominating probability kernels, sharing the common form:
$$
\pi(B|x):=\int_{B}g(x,\theta)\pi(\mathrm{d}\theta)\qquad\forall B\in\mathcal{F}.
$$

For Exponential models, namely
$$
f(x|\theta)=e^{\Phi(x,\theta)}h(x),\quad g(x,\theta)=\frac{f(x|\theta)}{\rho(x)}=\frac{e^{\Phi(x,\theta)}}{\int_{\Theta}e^{\Phi(x,\tau)}\pi(\mathrm{d}\tau)}
$$
for some measurable functions $h:\mathbf{X}\to(0,+\infty)$ and $\Phi:\mathbf{X}\times\Theta\to\mathbb{R}$. Here, $\pi$ denotes the prior probability measure. If $\theta\mapsto\nabla_x\Phi(x,\cdot)$ is Lipschitz for any $x\in \mathbf{X}$, then we have NULL-TV Lipschitz constant
$$
L:=\operatorname*{ess}\sup_{x\in \mathbf{X}}\left.(\mathcal{C}[g(x,\cdot)\pi)]\right)^{2}\left(\int_{\Theta}|\nabla_{\theta}\Psi(x,\theta)|^{2}g(x,\theta)\pi(\mathrm{d}\theta)\right)^{\frac{1}{2}}<+\infty 
$$

In the field of causal inference, recent works by \cite{farokhi2023distributionally,guo2022partial} have employed similar assumptions, leading to partial identification of causal effects in the presence of noisy covariates.

\section{Estimation}
With proxies $W$ and $Z$, we should solve the nuisance functions $h$, $q$ in Fredholm integral Equations~\eqref{eq.h_q}. After estimating $h$ and $q$, we employ \cite{colangelo2020double,yong2023doubly} to estimate the causal effect: 
\begin{equation}
  \mathbb{E}[Y|\operatorname{do}(a)]\approx  \mathbb{E} _n[K_{h_{\mathrm{bw}}}\left( A-a \right) q(a,Z)(Y-h(a,W))+h(a,W))], 
\end{equation}
where the indicator function $\mathbb{I}(A=a)$ in the doubly robust estimator for binary treatments \cite{colangelo2020double} is replaced with the kernel function $K_{h_{\mathrm{bw}}}(a_i-a) = 1/h_{\mathrm{bw}} K\left((a_i-a)/{h_{\mathrm{bw}}}\right)$ ($h_{\mathrm{bw}} > 0$ is the bandwidth), as a smooth approximation to make the estimation for continuous treatments feasible. In \cite{yong2023doubly}, this estimator coupled with Eq.~\ref{eq.nuisance} was shown to enjoy the optimal convergence rate with $h_{\mathrm{bw}}=O(n^{-1/5})$. Following \cite{yong2023doubly}, we need to some assumptions about kernel function.
\begin{assumption}\label{assum.kernel}
    The second-order symmetric kernel function $K\left(\cdot\right)$ is bounded differentiable, i.e., $\int k(u)\mathrm{d}u=1,\int uk(u)\mathrm{d}u=0,\kappa_{2}(K)=\int u^{2}k(u)\mathrm{d}u<\infty$. We define $\Omega_{2}^{(i)}(K)=\int(k^{(i)}(u))^{2}\mathrm{d}u$.
\end{assumption}
Assump.~\ref{assum.kernel} adheres to the conventional norms within the domain of nonparametric kernel estimation and maintains its validity across widely adopted kernel functions, including but not limited to the Epanechnikov and Gaussian kernels

\begin{theorem}\label{convergence}
    Under assump.~\ref{assum:dag},~\ref{assum:null-proxy} and \ref{assum:bridge} and \ref{assum.kernel}, 
    suppose $\|\hat{h}-h\|_{2}=o(1)$, $\|\hat{q}-q\|_{2}=o(1)$ and $\|\hat{h}-h\|_{2}\|\hat{q}-q\|_{2}=o((nh_{\mathrm{bw}})^{-1/2}),nh_{\mathrm{bw}}^5=O(1),nh_{\mathrm{bw}}\to\infty$, $h_0(a,w,x),p(a,z| w,x)$ and $p(a,w| z,x)$ are twice continuously differentiable wrt $a$ as well as $h_0,q_0,\hat{h},\hat{q}$ are uniformly bounded. Then for any $a$, we have the following for the bias and variance of the PKDR estimator given Eq.~\ref{eq.DR_kernel}:
    $$ 
    \mathrm{Bias}(\hat{\beta}(a)) := \mathbb E[ \hat{\beta}( a )] -\beta (a)=\frac{h_{\mathrm{bw}}^{2}}{2}\kappa _2(K) B +o((n h_{\mathrm{bw}})^{-1/2}),\mathrm{Var}[ \hat{\beta}(a)] = \frac{\Omega_{2}(K)}{nh_{\mathrm{bw}}}(V+o(1)),
    $$ 
    where $B = \mathbb{E} [ q_0( a,Z) [ \frac{\partial}{\partial A}h_0( a,W) \frac{\partial}{\partial A}p( a,W|Z) +\frac{1}{2}( \frac{\partial ^2}{\partial A^2}h_0( a,W)) ] ]$, $V = \mathbb{E} [ \mathbb{I}(A=a)q_0(a,Z) ^2( Y-h_0(a,W))^2]$.
\end{theorem}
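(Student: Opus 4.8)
The plan is to linearize $\hat\beta(a)=\mathbb{E}_n[\psi(O;a,\hat h,\hat q)]$, where $\psi(O;a,h,q):=K_{h_{\mathrm{bw}}}(A-a)\,q(a,Z)\big(Y-h(a,W)\big)+h(a,W)$, around the true bridge functions $(h_0,q_0)$, so that the bias and variance reduce to those of the oracle kernel-weighted average $\mathbb{E}_n[\psi(O;a,h_0,q_0)]$ while the nuisance-estimation error becomes a higher-order remainder. Assuming $(\hat h,\hat q)$ are fitted on an auxiliary fold (cross-fitting, as in \cite{yong2023doubly}), I would first record the two bridge identities implied by Eq.~\ref{eq.h_q}, namely $\mathbb{E}[Y\mid A,Z]=\mathbb{E}[h_0(A,W)\mid A,Z]$ and $\mathbb{E}[q_0(A,Z)\mid A,W]=1/p(A\mid W)$, together with the proximal identification $\beta(a)=\mathbb{E}[h_0(a,W)]$ from Theorem~\ref{thm:iden}, and then decompose
\[
\hat\beta(a)-\beta(a)=\big(\mathbb{E}_n-\mathbb{E}\big)\!\left[\psi(O;a,h_0,q_0)\right]+\Big(\mathbb{E}\!\left[\psi(O;a,h_0,q_0)\right]-\beta(a)\Big)+R_n ,
\]
where the first term drives the variance, the second is the oracle bias, and $R_n$ collects all contributions of $(\hat h-h_0,\hat q-q_0)$.

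For the oracle bias, since $\beta(a)=\mathbb{E}[h_0(a,W)]$ the middle term equals $\mathbb{E}[K_{h_{\mathrm{bw}}}(A-a)q_0(a,Z)(Y-h_0(a,W))]$; conditioning on $(A,Z)$ and using the outcome-bridge identity rewrites the inner factor as $\mathbb{E}[h_0(A,W)-h_0(a,W)\mid A,Z]$. I would then Taylor-expand $h_0(\cdot,W)$ to second order about $a$ (justified by the assumed twice-differentiability of $h_0$, $p(a,w\mid z)$, $p(a,z\mid w)$ in $a$), substitute $u=(A-a)/h_{\mathrm{bw}}$, and use $\int K=1$, $\int uK=0$, $\int u^2K=\kappa_2(K)$ (Assumption~\ref{assum.kernel}): the first-order term vanishes by symmetry of $K$, and collecting the remaining second-order contributions yields exactly the expression for $B$ in the statement, i.e. the bias is $\tfrac{h_{\mathrm{bw}}^2}{2}\kappa_2(K)\,B$ up to $o(h_{\mathrm{bw}}^2)=o((nh_{\mathrm{bw}})^{-1/2})$ under $nh_{\mathrm{bw}}^5=O(1)$. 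For the variance, cross-fitting gives $\mathrm{Var}(\hat\beta(a))=\tfrac1n\mathrm{Var}(\psi(O;a,h_0,q_0))+o((nh_{\mathrm{bw}})^{-1})$; since $\mathbb{E}[\psi(O;a,h_0,q_0)]=O(1)$, it remains to expand $\mathbb{E}[\psi(O;a,h_0,q_0)^2]$, in which the additive $h_0(a,W)$ piece and the cross term are $O(1)$ while the squared-kernel piece, after the same change of variables and using boundedness of $h_0,q_0$, equals $\tfrac1{h_{\mathrm{bw}}}\Omega_2(K)\,(V+o(1))$ with $V=\mathbb{E}[\mathbb{I}(A=a)q_0(a,Z)^2(Y-h_0(a,W))^2]$; dividing by $n$ gives the stated variance.

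For the remainder, I would exploit the bilinearity of $\psi$ in $(h,q)$ to write $R_n=(\mathbb{E}_n-\mathbb{E})[\psi(\hat h,\hat q)-\psi(h_0,q_0)]+\mathbb{E}[\psi(\hat h,\hat q)-\psi(h_0,q_0)\mid\mathrm{aux}]$. The first part has conditional mean zero, and since $Y,h_0,\hat h,q_0,\hat q$ are uniformly bounded its conditional variance is $\lesssim \tfrac1{nh_{\mathrm{bw}}}(\|\hat h-h_0\|_2^2+\|\hat q-q_0\|_2^2)=o_p((nh_{\mathrm{bw}})^{-1})$, so it is $o_p((nh_{\mathrm{bw}})^{-1/2})$ by conditional Chebyshev and independence across folds. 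The second part splits into a $\hat q$-only term, a $\hat h$-only term, and the genuine product term $-\mathbb{E}[K_{h_{\mathrm{bw}}}(A-a)(\hat q-q_0)(a,Z)(\hat h-h_0)(a,W)\mid\mathrm{aux}]$; the product term is $O(\|\hat h-h_0\|_2\|\hat q-q_0\|_2)$ by Cauchy--Schwarz, hence $o((nh_{\mathrm{bw}})^{-1/2})$ by the assumed product-rate condition, and the single-nuisance terms are handled by the two bridge identities, which make their leading ($h_{\mathrm{bw}}\!\to\!0$) parts cancel, leaving a $O(h_{\mathrm{bw}}^2)\big(\|\hat h-h_0\|_2+\|\hat q-q_0\|_2\big)$ residual after a Taylor expansion in $A-a$ (odd terms again killed by symmetry of $K$).

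The main obstacle is this last point: the double-robustness cancellations come from bridge identities evaluated at the \emph{running} treatment $A$, whereas the estimator plugs the nuisances in at the \emph{fixed} target value $a$, so one must show that this localization mismatch is $O(h_{\mathrm{bw}}^2)$ rather than $O(h_{\mathrm{bw}})$ --- precisely where symmetry of the kernel is needed --- in order for the whole remainder to stay below the $(nh_{\mathrm{bw}})^{-1/2}$ scale under $nh_{\mathrm{bw}}^5=O(1)$ together with $\|\hat h-h_0\|_2=o(1)$, $\|\hat q-q_0\|_2=o(1)$, and the product rate. A secondary issue, the empirical-process term, is circumvented by assuming cross-fitting (alternatively, a Donsker condition on the nuisance classes would suffice).
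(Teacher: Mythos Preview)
The paper does not actually prove this theorem: it simply states ``The proof of the Thm.~\ref{convergence} is detailed in \cite{yong2023doubly}'' and refers the reader to that companion work. So there is no in-paper argument to compare against; what you have written is precisely the kind of argument that reference is expected to contain.

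Your sketch is the standard route for kernel-localized proximal doubly robust estimators (mirroring \cite{colangelo2020double} adapted to bridge functions): (i) an oracle bias obtained by conditioning on $(A,Z)$, using the outcome-bridge identity $\mathbb{E}[Y\mid A,Z]=\mathbb{E}[h_0(A,W)\mid A,Z]$, and a second-order Taylor expansion of $A'\mapsto (h_0(A',W)-h_0(a,W))\,p(A',W\mid Z)$ around $A'=a$ together with $\int uK=0$, $\int u^2K=\kappa_2(K)$ to recover the stated $B$; (ii) variance from the $K_{h_{\mathrm{bw}}}^2$ piece giving $\Omega_2(K)V/(nh_{\mathrm{bw}})$; and (iii) a remainder split into an empirical-process part handled by cross-fitting and a population part where the two single-nuisance contributions have their $h_{\mathrm{bw}}\!\to\!0$ limits annihilated by the bridge identities $\mathbb{E}[q_0(a,Z)\mid A{=}a,W]=1/p(a\mid W)$ and $\mathbb{E}[Y-h_0(A,W)\mid A,Z]=0$, leaving $O(h_{\mathrm{bw}}^2)\big(\|\hat h-h_0\|_2+\|\hat q-q_0\|_2\big)=o((nh_{\mathrm{bw}})^{-1/2})$ under $nh_{\mathrm{bw}}^5=O(1)$, while the genuine cross term is controlled by the product-rate assumption. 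Your identification of the ``localization mismatch'' (bridge identities at the running $A$ versus nuisances evaluated at the fixed $a$) as the crux, and of kernel symmetry as the device that upgrades the residual from $O(h_{\mathrm{bw}})$ to $O(h_{\mathrm{bw}}^2)$, is exactly right and is where most of the work in such proofs lies. The only thing you add beyond what the theorem literally states is cross-fitting; that is indeed how \cite{yong2023doubly} proceeds and is implicit here.
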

The proof of the Thm.~\ref{convergence} is detailed in \cite{yong2023doubly}.

\section{Experiments}
In this appendix, we provide more details of two experiments of sec.~\ref{sec.synthetic_experiment}. Besides, we also conduct a new experiment to illustrate the limitation of hypothesis testing in Sec.~\ref{sec.limitation}.

\subsection{Synthetic Study of Sec.~\ref{sec.synthetic_experiment}}
We present the results of hypothesis testing and offer implementation details of causal estimation in the context of synthetic data.


\subsubsection{Hypothesis Testing for Causal Discovery}
\noindent\textbf{Data generation.} We describe the data generating mechanism of section~\ref{sec.synthetic_experiment}. We consider five treatments $\{A_i\}_{i=1}^5$, four outcomes$\{Y_i\}_{i=1}^4$, and one unobserved confounder $U$. We use the following structural equations: $U\sim \mathrm{Uniform}(-1,1)$, $A_i=g_i(U)+\epsilon_i$, with $g_i$ randomly chosen from $\{linear,tanh,sin,cos,sigmoid\}$ and $\epsilon_i\sim N(0,1)$ and $\mathbf{Y}$ is a non-linear structure. Specifically, 
\begin{minipage}{.5\textwidth}
$$
    \begin{cases}
	A_1=0.5\times \left( U+5 \right) +\varepsilon _1\\
	A_2=0.5\times \left( \tanh\mathrm{(}U)+3 \right) +\varepsilon _2\\
	A_3=0.5\times \left( \sin \left( \frac{\pi}{8}U \right) +3 \right) +\varepsilon _3\\
	A_4=0.5\times \left( \frac{1}{1.0+\exp \left( U \right) +3} \right) +\varepsilon _4\\
	A_5=0.5\times \left( \cos \left( \frac{\pi}{8}U \right) +3 \right) +\varepsilon _5\\
\end{cases}
$$
where $\epsilon_i\sim N(0,1)$.
\end{minipage}%
\begin{minipage}{.5\textwidth}
$$
\begin{cases}
	Y_1=2\sin \left( 1.4A_1+2A_{3}^{2} \right)\\
	\qquad\,      +0.5\left( A_2+A_{4}^{2}+A_5 \right)+ A_{3}^{3}+U+\epsilon _1\\
	Y_2=-2\cos \left( 1.8A_2 \right) +1.5A_{4}^{2}+U+\epsilon _2\\
	Y_3=0.7A_{3}^{2}+1.2A_4+U+\epsilon _3\\
	Y_4=1.6e^{-A_1+1}+2.3A_{5}^{2}+U+\epsilon _4\\
\end{cases}
$$
where $\epsilon_i\sim N(0,1)$.
\end{minipage}

\noindent\textbf{Independence test:} \textbf{i) Fisher} that Fisher conditional independence test; \textbf{ii) POP} that The Promises of Parallel Outcomes; \textbf{iii) Ours} that proxy based approach.

\noindent\textbf{Implementation details.} For hypothesis testing, we set the significant level $\alpha$ as 0.05, and the bin numbers for discretization of $A,W,Y$ as $I := |A| = 15, K := |W|= 8, L := |Y| = 5$, respectively. The asymptotic estimators in Eq.~\ref{eq.q_and_Q} are obtained by empirical probability mass functions. The sample size is set to $\{400; 600; 800;1,000; 1,200; 1,400; 1,600\}$. To remove the effect of randomness, we repeat for $50$ replications.
\begin{figure}[htbp]
    \centering
    \includegraphics[width=0.95\textwidth]{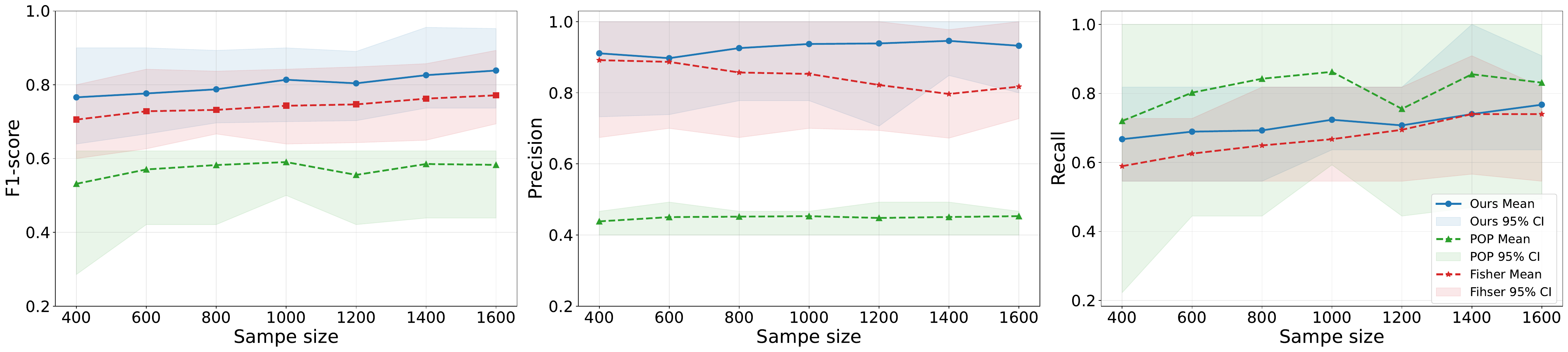}
    \caption{Performance of our method and baselines under the setting of multi-treatments and multi-outcomes.}
    \label{fig:discover}
\end{figure}

\noindent\textbf{Results.} As Fig.~\ref{fig:discover}, we present the performance of our method and the baselines in the context of multiple treatments and outcomes. Notably, our method consistently outperforms all other alternatives across most metrics. It is evident that the POP method which operates within a similar multi-outcome setting, exhibits significantly poor performance due to its dependence on the assumption of linear structural equations. In comparison to Fisher testing, our approach not only demonstrates superior performance but also exhibits greater stability, as it effectively leverages information from proxy variables associated with other treatments.

\noindent\textbf{Influence of bin numbers.} Fig.~\ref{fig:bin} shows the $\mathrm{F}_1$ score, precision, and recall of our method with bin numbers
varying from 2 to 11. As observed, precision improves as we increase the number of bins, up to a point where it reaches its peak at 9 bins, beyond which further increments do not yield significant improvements.

\begin{figure}[htbp]
    \centering
    \includegraphics[width=0.95\textwidth]{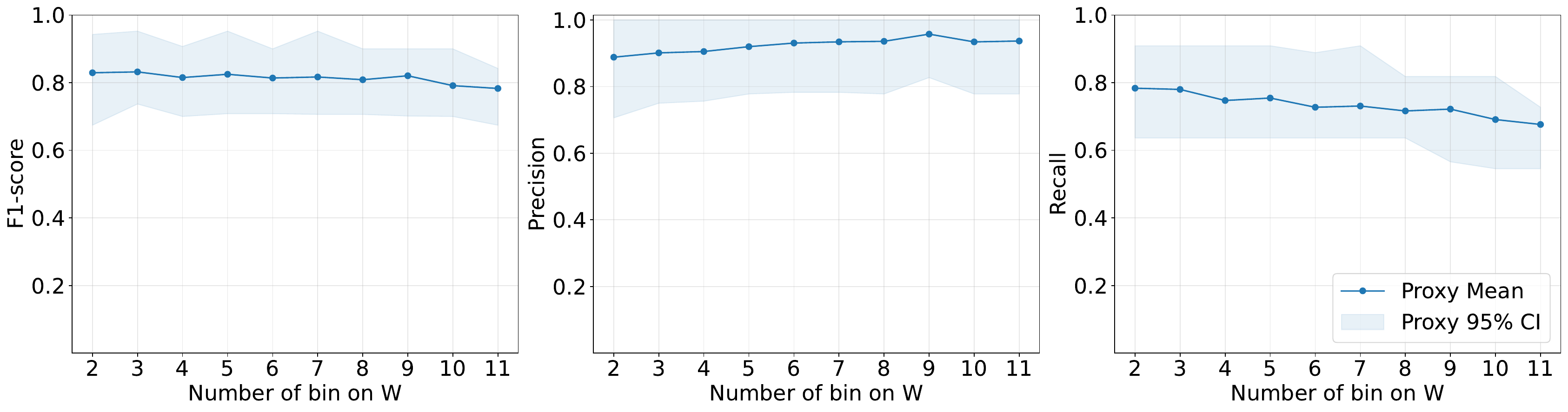}
    \caption{$\mathrm{F}_1$ score, precision, and recall of our method under different bin numbers.}
    \label{fig:bin}
\end{figure}

\subsubsection{Causal Effect Estimation in Sec.~\ref{sec.synthetic_experiment}}

We introduce implementation details of the causal effect estimation.

\noindent\textbf{Implementation details.} For $A_3\to Y_1$, $A_2\to Y_2$ and $(A_1, A_3) \to Y_1 $, we select two proxies: $Z=Y_3$ and $W=A_5$. For $(A_1, A_5) \to Y_4$, we select two proxies: $Z=Y_2$ and $W=A_3$. For the PMMR method, we use the Gaussian kernel where the bandwidth is determined by the median trick. Specifically, we set $\gamma^{-1}=\mathrm{median}\{\|x_{i}-x_{j}\|_{2}^{2}\}_{i<j\in I}$ for indices subset $I\subset \{1,\ldots,n\}$. Regarding the regularization parameters, denoted as $\lambda _h$ and $\lambda _q$, we select them according to the Tab.~\ref{tab:hyper}. 

\begin{table}[htbp]
  \centering
  \caption{Hyperparameters for and PMMR models.}
    \begin{tabular}{ccccc}
    \toprule
          &$\lambda _{h_1}$ & $\lambda _{h_2}$ &$\lambda _{q_1}$  &$\lambda _{q_2}$  \\
    \midrule
    $ A_3 \to Y_1$ & 0.05  & 0.05   & 0.20   & 0.20 \\
    $A_2 \to Y_2$ & 0.20   & 0.20     & 1.00     & 1.00 \\
    $(A_1, A_3) \to Y_1$ & 0.20  & 0.20   & 1.00   & 1.00 \\
    $(A_1, A_5) \to Y_4$ & 0.20   & 0.20    & 0.20      & 0.20 \\
    \bottomrule
    \end{tabular}%
  \label{tab:hyper}%
\end{table}%
For density estimation, we also choose the Gaussian kernel. For bandwidth, we employ three-fold cross-validation, where the bandwidth is chosen as 20 values uniformly distributed in logarithmic space, ranging from $10^{-0.1}$ to $1$.

\subsubsection{Visualizing Distribution of Treatments and Outcomes}

As shown in Fig.~ \ref{fig:hypo}, all treatments tend to be distributed according to a normal distribution, primarily influenced by the noise in the data. Conversely, the four observed outcomes manifest a predominantly long-tailed distribution.

\begin{figure}[htbp!]
    \centering
    \includegraphics[width=0.9\textwidth]{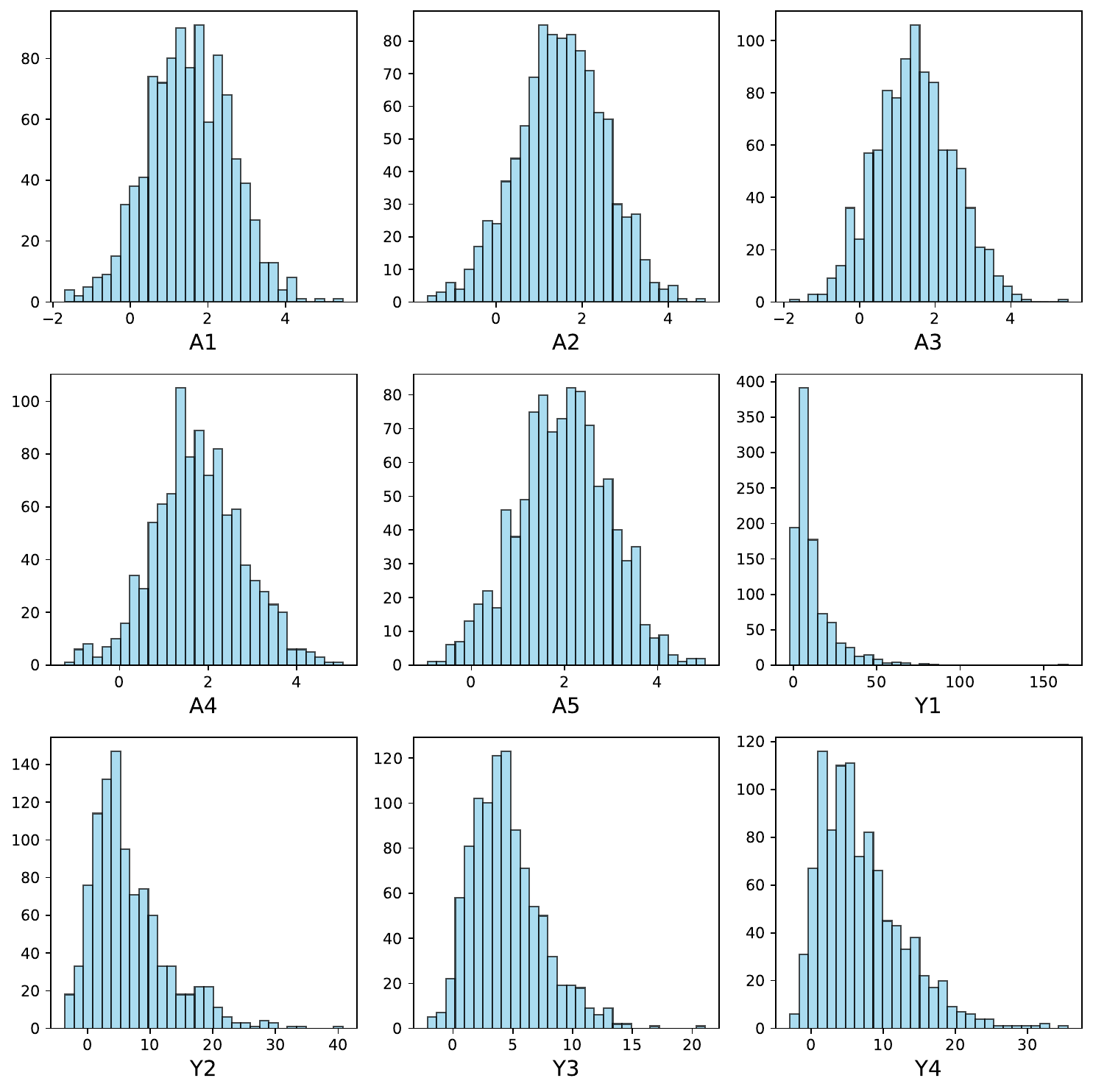}
    \caption{Histograms of Synthetic dataset; sample size=1000.}
    \label{fig:hypo}
\end{figure}

\subsection{Limitations about Hypothesis Testing in Sec.~\ref{sec.conclusion}}
\label{sec.limitation}

In the conclusion section, we mention that the testing may suffer from large type-I errors in causal relations identification if the strength of the proxy variable is not strong enough. In order to gain a more comprehensive understanding of this issue, we measure the effect of varying proxy variable strength on the hypothesis testing results.

\noindent\textbf{Data generation.} Here we conduct two experiments, whose structural equations are listed as follows. For each experiment, we generate 1,000 samples. To explore the impact of proxy variable strength on the hypothesis testing, we set the coefficient of the proxy variables $\beta$ change from 0.1, 1, 10, 20, 50, and 100.
 
\begin{minipage}{.5\textwidth}
Experiment (I)
$$
\begin{cases}
	U\sim N\left( 0,1 \right),\\
	A=U+\varepsilon_1,\\
	W=\beta \times U+\varepsilon_2,\\
	Y= \begin{cases}
	    A+U+\varepsilon_3, \ A \to Y \\
            U+\varepsilon_3, \ A \not\to Y.
	\end{cases}
\end{cases}
$$
where $\epsilon_i\sim N(0,1)$. 
\end{minipage}%
\begin{minipage}{.5\textwidth}
Experiment (II)
$$
\begin{cases}
	U\sim N\left( 0,1 \right),\\
	A=U+\varepsilon_1,\\
	W=\beta \times \tanh \left( U \right) +\varepsilon _2,\\
	Y= \begin{cases}
	    A+U+\varepsilon_3, \ A \to Y \\
            U+\varepsilon_3, \ A \not\to Y.
	\end{cases}
\end{cases}
$$
where $\epsilon_i\sim N(0,1)$. 
\end{minipage}%

\noindent\textbf{Implementation details. }For hypothesis testing, we set the significant level $\alpha$ as 0.05, and the bin numbers for discretization of $A,W,Y$ as $I := |A| = 15, K := |W|= 8, L := |Y| = 5$. The asymptotic estimators in Eq.~\ref{eq.q_and_Q} are obtained by empirical probability mass functions. To remove the effect of randomness, we repeat for $50$ replications.

\noindent\textbf{Results.} For $A \to Y$, we report the type-II error to see if we successfully detect this edge, while for $A \not\to Y$, we report the type-I error to see whether we falsely reject the null hypothesis. According to Tab.~\ref{tab:proxy}, we find that when $A\not\to Y$, we get a larger type-I error as the strength decreases. This is because in scenarios where $A\not\to Y$, as the proxy strength diminishes, its ability to explain the variable $U$ decreases. Consequently, it becomes challenging to utilize them in hypothesis testing. Moreover, we find that non-linear functions can make it more difficult to infer independent relationships, thus requiring strong proxy strength.

\begin{table}[htbp]
  \centering
  \caption{The experimental results in the presence of proxy variables of different strengths. For $A\to Y$, we report the Type-II error; for $A\not\to Y$, we report the Type-I error.}
    \begin{tabular}{ccccc}
    \toprule
    \multirow{2}{*}{Proxy strength} & \multicolumn{2}{c}{Linear} & \multicolumn{2}{c}{Nonlinear} \\
\cmidrule{2-5}          & $A\to Y$ (Type-II) & $A\not\to Y$ (Type-I) & $A\to Y$ (Type-II) & $A\not\to Y$ (Type-I) \\
    \midrule
    $\beta=0.1$     & 0.00     & 1.00     & 0.00     & 1.00 \\
    $\beta=1$     & 0.00     & 0.24  & 0.00     & 0.58 \\
    $\beta=10$     & 0.02  & 0.04  & 0.02  & 0.12 \\
    $\beta=20$     & 0.00     & 0.06  & 0.00     & 0.01 \\
    $\beta=50$     & 0.00     & 0.12  & 0.00     & 0.04 \\
    $\beta=100$     & 0.02  & 0.04  & 0.02  & 0.02 \\
    \bottomrule
    \end{tabular}%
  \label{tab:proxy}%
\end{table}%

\subsection{Real-word Study of Sec.~\ref{sec.mimic}}

We introduce more details for the experiment in Sec.~\ref{sec.mimic}, including hypothesis testing, implementation details of the causal effect estimation, and visualization of treatments and outcomes' distributions. 



\textbf{Hypothesis Testing of Baselines} We present the estimated causal graph by two baseline methods, Fisher's independence test and POP. The learned DAGs are shown in Fig~\ref{fig: true_DAG}.

As shown, the DAGs estimated by both methods do not align well with established medical knowledge. For instance, previous research, such as the study by \cite{gader1975effect}, has demonstrated that Norepinephrine indeed leads to an increase in blood pressure however the Fisher method fails to detect this relation. Furthermore, both methods find that all treatments have no significant effect on platelets, which is different from existing studies \citep{belin2023manually}. 

\begin{figure}[htbp!]
    \centering
    \includegraphics[width=0.7\textwidth]{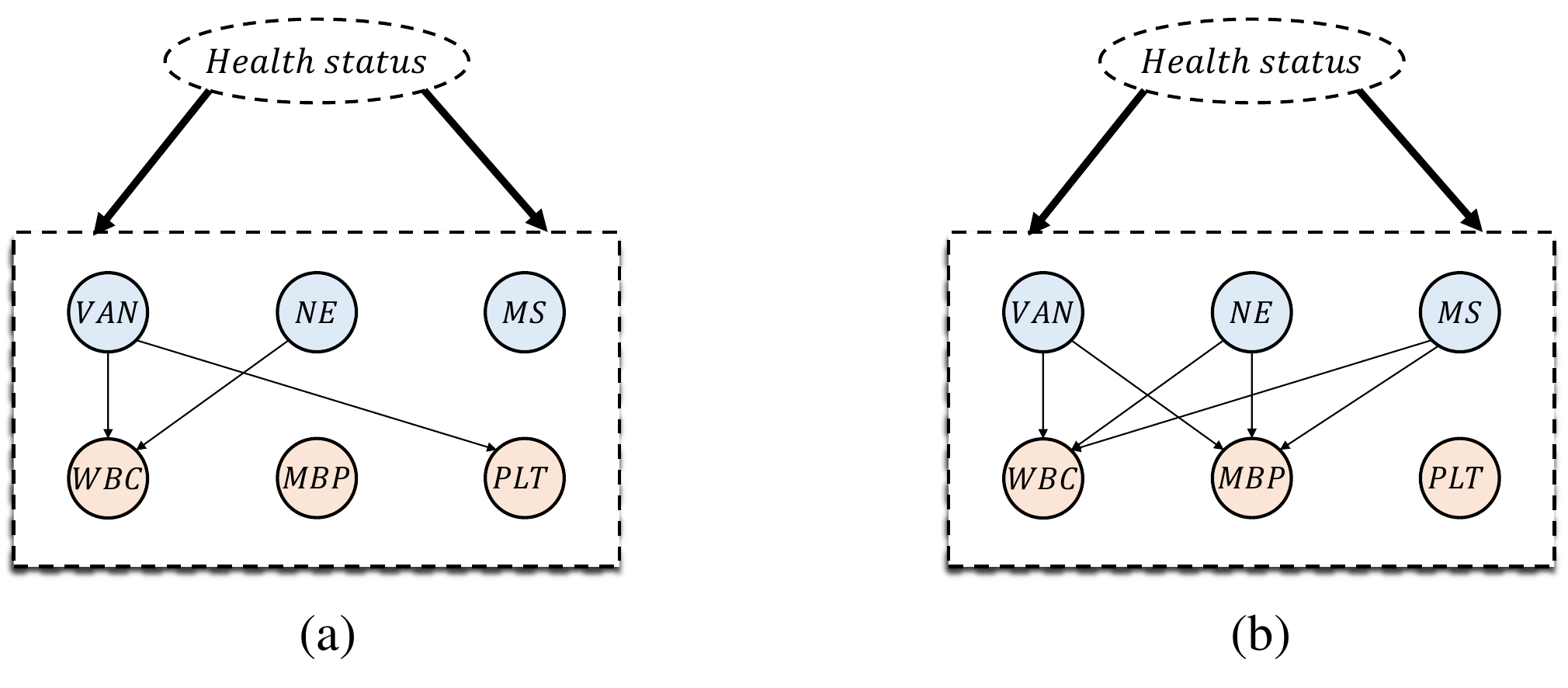}
    \caption{The causal graph over treatments (blue) and outcomes (yellow) of (a) Fisher (b) POP.}
    \label{fig: true_DAG}
\end{figure}

\noindent\textbf{Implementation details of causal effect estimation.} For $\mathrm{NE} \to \mathrm{MBP}$, we select two proxies: $Z=\mathrm{WBC}$ and $W=\mathrm{MS}$ and $X=\mathrm{NE}$. For $\mathrm{VAN} \to \mathrm{WBC}$, we select two proxies: $Z=\mathrm{PLT} $ and $W=\mathrm{MBP}$. For $\mathrm{MS} \to \mathrm{PLT}$, we select two proxies: $Z=\mathrm{MBP}$ and $W=\mathrm{VAN}$. For PMMR method, we use the Gaussian kernel where the bandwidth is determined by the median trick. Regarding the regularization parameters, denoted as $\lambda _h$ and $\lambda _q$, we select it according to the Tab.~\ref{tab:real-hyper}. 

\begin{table}[htbp]
  \centering
  \caption{Hyperparameters for PMMR models in mimic dataset}
  \scalebox{1}{
    \begin{tabular}{ccccc}
    \toprule
        &$\lambda _{h_1}$ & $\lambda _{h_2}$ &$\lambda _{q_1}$  &$\lambda _{q_2}$  \\
    \midrule
    $\mathrm{VAN} \to \mathrm{WBC}$ & 0.2 & 0.2   & 0.1 & 0.1   \\
    $\mathrm{NE} \to \mathrm{MBP}$ & 0.2  & 0.2   & 0.2 & 0.2   \\
    $\mathrm{MS} \to \mathrm{PLT}$ & 0.2   & 0.2     & 0.2 & 0.2   \\
    \bottomrule
    \end{tabular}}
  \label{tab:real-hyper}%
\end{table}%

\textbf{Distribution of Treatments and Outcomes.} As Fig.~\ref{fig:true}, we present the histograms for the treatments and outcomes observed in the MIMIC. It is important to note that normal reference ranges for blood indicators in healthy individuals are as follows: White Blood Cell Count (4-10), Mean Blood Pressure (70-105), and Platelets (100-300).

\begin{figure}[htbp]
    \centering
    \includegraphics[width=0.9\textwidth]{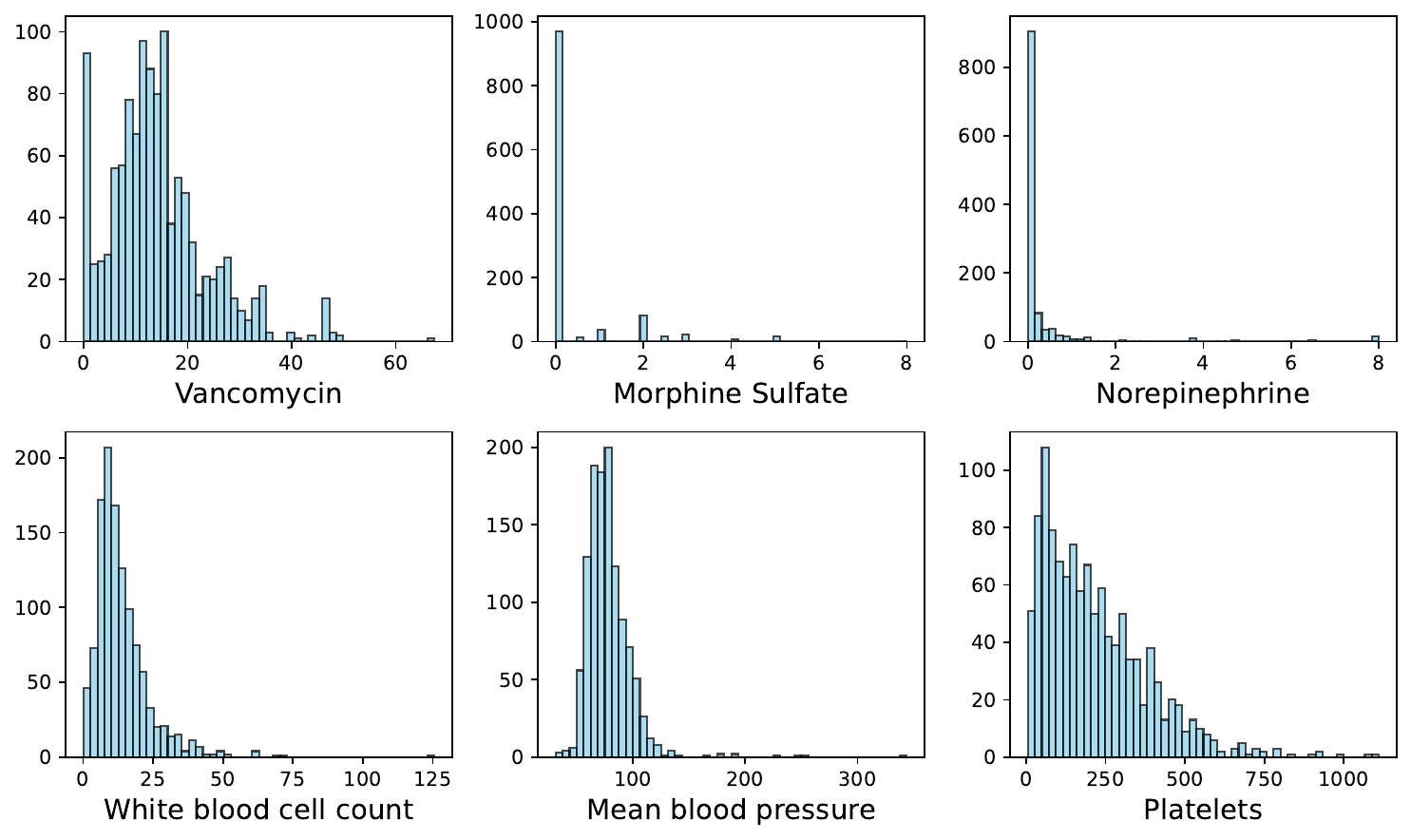}
    \caption{Histograms of treatments and outcomes in MIMIC dataset; sample size=1165.}
    \label{fig:true}
\end{figure}

\end{document}